\documentclass[11pt]{amsart}
\usepackage{amsfonts,amssymb,graphicx,mystyle,color,amsmath}
\usepackage[round]{natbib}
\usepackage{array}
\usepackage{tikz,istgame}
\usepackage{enumitem}
\usepackage{comment}
\usepackage{setspace}
\usepackage[colorlinks,citecolor={blue},linkcolor=blue, runcolor={blue}]{hyperref} 
\usepackage{bbm}
    
\def\a{\alpha} \def\b{\beta} \def\d{\delta} \def\e{\varepsilon}  \def\g{\gamma}  \def\s{\sigma}  
\def\D{\Delta}   
\def\S{\Sigma}
     \def\N{\mathcal N}

\def\Re{\mathbb{R}}

\def\rho{\varrho}

\def\a{\alpha}
\def\d{\delta}

\begin{document}\openup 1\jot
\title[Index and Robustness in Mixed Equilibria]{Index and Robustness of Mixed Equilibria: an algebraic approach \\}
\author[L. Pahl]{Lucas Pahl}
\thanks{I am grateful to Srihari Govindan and Rida Laraki for comments and stimulating conversations. I thank Julien Fixary for comments and proofreading. \\ }
\address{School of Economics, University of Sheffield, Elmﬁeld Building, Northumberland Rd, Sheffield S10 2TU}
\date{\today}

\begin{abstract}
We present a new method for computation of the index of completely mixed equilibria in finite games, based on the work of \cite{EL1977}. We apply this method to solving two questions about the relation of the index of equilibria and the index of fixed points, and the index of equilibria and payoff-robustness: any integer can be the index of an isolated completely mixed equilibrium of a finite game. In a particular class of isolated completely mixed equilibria, called monogenic, the index can be $0$, $+1$ or $-1$ only. In this class non-zero index is equivalent to payoff-robustness. We also discuss extensions of the method of computation to extensive-form games, and cases where the equilibria might be located on the boundary of the strategy set. 
\end{abstract}

\maketitle

\section{Introduction}

Topological methods are an integral part of game theory. Nash used the Brouwer fixed point theorem (\cite{JN1951}) and its generalization to correspondences (\cite{JN1950}) to prove existence of Nash equilibria, establishing a useful method that is widely used until this day. In refinements, most known solution concepts require robustness of equilibria to some kind of perturbation, and these have also been investigated by the use of topological tools. The stable equilibrium program, originated with work of \cite{KM1986}, and the literature on dynamic stability in game theory (see \cite{KR1994}) are exemplary in the modern theory of refinements in showcasing the centrality of topological methods for identifying equilibria that satisfy a number of desirable properties. 

One of the cleanest criteria for equilibrium selection stemming out of this literature is the non-zero index rule.\footnote{The index is an integer number associated to an equilibrium that captures the robustness of the equilibrium to perturbations of its associated best-reply correspondence to nearby continuous maps. Whenever the number is non-zero, the equilibrium is robust in this sense. Non-zero index equilibria are in particular robust to payoff perturbations, and are, for example, proper.} Non-zero index equilibria have recently been axiomatized, for generic-games in extensive form, by \cite{GW2025}\footnote{The axiomatization is of decision-theoretic nature and allows for a direct comparison to the solution concept of Mertens stability, also axiomatized in the same paper.}, and \cite{PP2025} have proved a simple characterization that shows that these equilibria are \textit{hyperstable} (\cite{KM1986}), i.e., they are the invariant and payoff-robust ones.\footnote{Invariance here means that if we add finitely many duplicate strategies to the game, that is, mixed strategies as pure strategies, the hyperstable equilibria are still payoff-robust in the game with added strategies. Payoff-robustness of an equilibrium means that for sufficiently small payoff-perturbations, there are equilibria close to the given equilibrium point.} In particular, in terms of desirable properties, non-zero index equilibrium \textit{outcomes} (i.e., isolated equilibrium outcomes induced by a non-zero index component of mixed strategies) satisfy all the properties highlighted in \cite{KM1986}: existence, admissibility, invariance, iterated dominance and sequential rationality. Because of these desirable properties, which are not simultaneously satisfied by other classical refinement concepts (perfect, proper, sequential equilibria, etc.), there is a strong case for the selection and computation of non-zero index equilibrium outcomes in finite games.

Unfortunately, in most non-trivial cases, the computation of the index is not an easy task to undertake. The topological tools involved in the computation of the index of equilibria do not fully explore the fact that the equilibria of finite games are described by polynomial systems of equalities and inequalities. In fact, when considering the standard formulae for computing the index of equilibria (see subsection 3.6 of \cite{LP2023}), the particular polynomial nature of the maps is ignored. There are many difficulties that arise from this.  An example: a known method to prove that a certain equilibrium is payoff robust is the verification of its non-zero index (this is a sufficient condition for payoff-robustness): to verify that, one can perturb the payoffs (in normal form) of the game to a nearby generic payoff function. Only a finite number of equilibria (with index +1 or -1) arise around the original equilibrium, and the index of this equilibrium is the sum of the indices of the equilibria that arose from the perturbation. There are a number of difficulties with the application of this procedure, for instance: how small should the generic payoff-perturbation be? Is there a canonical way to perturb the game in examples? After perturbing the game, is there a generally applicable method to find \textit{all} equilibria nearby?

These questions exemplify one of the central issues with the application of the index of equilibria: the methods involve perturbations, and since there is no canonical way to handle these in general, the success of the method depends on clever choices made by the analyst in the specific example under analysis. 

This paper presents a perturbation-free method to compute the index of isolated and completely mixed equilibria in finite games and applies it to solve a few questions about the relation of the index of fixed points and the index of equilibria: the first question is about the possible indices that isolated equilibria could have. In fixed point theory, isolated fixed-points can have any integer as an index. In game theory, it is known that if non-degenerate connected components of equilibria are allowed\footnote{It is frequent that extensive-form games with generic terminal payoffs have equilibrium outcomes induced by a connected set of mixed strategies of the players - these are frequently non-singleton connected components of equilibria. In index theory, we assign an index to a component of equilibria.}, then these could have any integers as indices as well (\cite{GSS2004}). However, the polynomial system describing isolated completely mixed equilibria seems quite particular - they are made of multilinear polynomials, and have the same number of variables as equations, thus suggesting, at least naively, a picture that would not allow for a wide variety of indices (in bimatrix games, the fact that an equilibrium is completely mixed forces its index to be $-1$ or $+1$, due to the linearity of the payoffs in the opponent's strategy). Another reason the completely-mixed equilibrium environment under study looks restrictive comes from the main result in \cite{GLP2025}: the authors in that paper showed that  - modulo the addition of duplicate strategies - any finite set of completely mixed strategy profiles, each with a corresponding sign of either $+1$ or $-1$, around a connected component of equilibria could be generated as equilibria of an arbitrarily close-by game (i.e., by an arbitrarily small payoff perturbation), as long as the sum of those signs is equal to the index of the component. This result seemed to indicate that completely mixed and isolated equilibria could only have indices of $+1$ or $-1$. 

In this paper we show this is not the case: isolated completely mixed equilibria can be of any index. But the result is more nuanced: in a class of game-equilibrium pairs we call \textit{monogenic}, and  in which the isolated completely mixed equilibria are not too singular (i.e., where the Jacobian of the system describing it loses full rank by at most $1$ at this equilibrium point), the only possible indices for the equilibria under study are $0,+1$ and $-1$. More interestingly, in this class, the non-zero index criterion is \textit{equivalent} to payoff-robustness: so, if in a more general context one needs to consider the invariance to addition of duplicates together with payoff-robustness to characterize non-zero index equilibria (see \cite{GW2005} and \cite{PP2025}), in the monogenic environment payoff-robustness \textit{means} non-zero index. Finally, we show in Theorem \ref{finalthm} that outside of the monogenic environment, isolated completely mixed equilibria can have any index. 

These results have immediate implications for the refinement literature: it is known (see Example \ref{example}) that completely mixed equilibrium might not be robust to small payoff perturbations. The practical relevance of this observation is that the lack of robustness can only be ignored by the analyst if he/she believes that the modeled payoffs match exactly the payoffs in reality, thus making the concern of small parameter misspecification irrelevant.  Since it is frequent that this is not the case, verifying payoff-robustness becomes an important selection criterion for equilibrium outcomes, and distinguishing between robust and non-robust equilibria is necessary. Directly verifying payoff robustness is usually impractical even in simple examples, thus making the index a valuable tool. But the non-zero index criterion in general is known to merely imply payoff-robustness. As we show, in the monogenic case, the non-zero index criterion is also necessary for payoff robustness.  Moreover, the algebraic method employed to verify the non-zero index of  equilibria can be significantly easier to apply than the standard available methods.

This paper is organized as follows: section \ref{SEC2} comprises preliminary facts on algebra and topology that we use throughout this paper: we present a number of  definitions in basic abstract algebra and algebraic geometry that might not be in the toolkit of the theoretical economist and game theorist and also refer to the relevant literature for consultation.  We also prove the main index formula that will be used throughout the paper (Proposition \ref{auxprop}). Section \ref{SEC3} presents the main results on the monogenic class and its relevance for payoff-robustness (Theorem \ref{mainthm1}). We also present a detailed application of the index formula for computation of the index of an equilibrium in an example (Example \ref{example}), and the result stating that isolated completely mixed equilibria can have any index (Theorem \ref{finalthm}). In Section \ref{SEC4} we discuss extensions of the main results to extensive-form games, to the case of non-degenerate components of equilibria, and equilibria that might not be completely mixed. Section \ref{SEC5} contains final remarks on open problems and a discussion of the cases to which the results in this paper do not apply.

\subsection{A motivating example}\label{motivatingexample} We show through the analysis of an example, due to E. Solan and O.N. Solan in \citet{AM2018}, p. 151, how verifying the sign of the index of an equilibrium can be a complicated task with the use of standard  tools. Later, we revisit this example to show how index computation is straightforward applying the new tools. Consider the following three-player game: 
\medskip
\begin{table}[h]
\centering

\begin{minipage}{0.45\linewidth}
\centering

\begin{tabular}{c c c}
\cline{2-3}
 & \multicolumn{1}{|c|}{$a$} & \multicolumn{1}{|c|}{$b$} \\
\cline{1-1}\cline{2-3}
\multicolumn{1}{|c|}{$a$} & \multicolumn{1}{|c|}{$(1,1,1)$}  & \multicolumn{1}{|c|}{$(-5,0,3)$} \\
\cline{1-3}
\multicolumn{1}{|c|}{$b$} & \multicolumn{1}{|c|}{$(0,3,-5)$} & \multicolumn{1}{|c|}{$(0,0,1)$} \\
\cline{1-3}
\end{tabular}

\vspace{0.5em}
\textbf{Player 3 chooses $a$}
\end{minipage}
\hfill
\begin{minipage}{0.45\linewidth}
\centering

\begin{tabular}{c c c}
\cline{2-3}
 & \multicolumn{1}{|c|}{$a$} & \multicolumn{1}{|c|}{$b$} \\
\cline{1-1}\cline{2-3}
\multicolumn{1}{|c|}{$a$} & \multicolumn{1}{|c|}{$(3,-5,0)$} & \multicolumn{1}{|c|}{$(1,0,0)$} \\
\cline{1-3}
\multicolumn{1}{|c|}{$b$} & \multicolumn{1}{|c|}{$(0,1,0)$}  & \multicolumn{1}{|c|}{$(0,0,0)$} \\
\cline{1-3}
\end{tabular}

\vspace{0.5em}
\textbf{Player 3 chooses $b$}
\end{minipage}

\medskip
\end{table}

There is a completely mixed equilibrium $\s^*$ in this game, where all players use equal weights for all strategies. The simplest way to compute the index of this equilibrium is to attempt to use some of the index properties. In this case that the sum of the indices of all equilbrium connected components must be $+1$. But this immediately imposes the difficult task of checking whether the game has other equilibrium points. If it does, then, hopefully, other properties of the index of equilibria could eventually be used to pin down the sign of the index of $\s^*$. In the example, it is easy to see that there is a strict equilibrium in $(a,a,a)$, and index theory tells us that such an equilibrium must have index $+1$. Therefore, all other equilibria of the game must have indices whose sum is $0$. If somehow one could check that other equilibria - besides the completely mixed - do not exist, we would conclude that $\s^*$ has index $0$. But as mentioned, even in this game, this is not straightforward. 

The next method one could use is to check whether the Jacobian of a system describing the completely mixed equilibrium of the game $\s^*$ is non-singular. The following system, obtained by payoff-indifference conditions, where we translate $\s^*$ to the point $0$ would do: 

\begin{equation}\label{system}
\begin{aligned} 
 p_1 &= x-y+xy = 0 \\ p_2 &= y-z+yz = 0 \\ p_3 &= z-x+zx =0
\end{aligned} 
\end{equation} 
\medskip
Verifying that the Jacobian of this system at $0$ is non-singular would imply that the equilibrium has non-zero index. But a quick verification shows that the Jacobian at $\s^*$ is singular. These are the easiest methods to compute the sign of the index of $\s^*$. Another frequently applied method involves payoff perturbations: one could use the property that the index is \textit{locally constant with respect to payoff perturbations}, that is, fixing a neighborhood of $\s^*$ in the mixed strategy set that contains no other equilibrium in its closure, there exists a $\delta>0$ such that for any generic $\delta$-perturbation of the payoffs of the game, there are finitely many equilibria in the interior of the neighborhood, each with an index of $+1$ or $-1$ summing to the index of $\s^*$. The difficulty with using this property is clear: one in principle doesn't know how small $\delta>0$ must be, nor have any idea how many equilibria may arise in the neighborhood; one needs to determine all these equilibria and compute the index of each one separately. 

If for some reason one believes $\s^*$ has index $0$, one could attempt to construct a perturbation of the payoffs that has no nearby equilibrium (this is also not straightforward), which is, in general, only a sufficient condition for $0$-index. We return to this example later to show how the index can be computed from analysing simple algebraic properties of the system of polynomials in \eqref{system}.

\section{Preliminaries}\label{SEC2}

We review some basic algebraic and topological facts that are required for this paper. For the algebra, a general practical source for what is not defined in this paper is \cite{CLO2004}. For the topological degree and index theories a standard reference is \cite{D1972}. 

\subsection{Algebraic Preliminaries} A monomial in a collection of $n$-variables $X_1,...., X_n$ is a product $X^{\a} = X^{\a_1}_1...X^{\a_n}_n$, where $\a_i$ is a non-negative integer. Real polynomials in $n$ variables can be written as finite linear combinations of monomials, with real coefficients. Formal real power series comprise the set of infinite linear combinations of monomials with real coefficients. Denote by $\mathbb{R}[[X_1,....,X_n]]$ the set of formal power series. The set of real formal power series is an example of a commutative ring with unity, where operations of addition and multiplication are the usual operations one generalizes from polynomial rings (which are also commutative rings with unity). 

An ideal $I$ of a ring $R$ is a subset of $R$ that is closed for addition ($x,y \in I \implies x + y \in I$) and absorbs multiplication ($x \in R, y \in I \implies xy \in I$). An ideal $I$ is finitely generated if there exists $f_1,...,f_n \in I$ such that for any element $x \in I$, there exist $a_i \in R, i=1,....,n$, such that $x = \sum_i a_i f_i$. In this case, we denote $I = \langle f_1,...,f_n \rangle$. If $I ,J$ are ideals of $R$, $I+J$ (the sums of elements of each ideal) and $I \cap J$ are ideals of $R$. For a finite positive integer $k$, $I^k$ denotes the smallest ideal generated by the product of $k$ elements of $I$. An ideal is called square-zero if $I^2 = 0$. An ideal that is generated by a single element is called \textit{principal}. 

Given an ideal $I \subset R$,  say that we define an equivalence relation $\sim$ in $R$,  by letting $x \sim y \iff x-y \in I$. The equivalence class of $x$ is denoted $[x]$. The set of equivalence classes under this relation is denoted $R/I$ or $\frac{R}{I}$, and is given a ring structure from the operations of $R$: $x,y \in R, [x] \oplus [y] := [x+y]$ and $z,x \in R, [z] \odot [x] := [zx]$. This ring defined on the equivalence classes is called a quotient ring. If $I \subset J$ are ideals of $R$, then $J/I$ is also defined as the set of equivalence classes of elements of $J$ and is an ideal of $R/I$.

In this paper we will mostly consider ideals of real formal power series that are finitely generated, and quotient rings of such ideals. Note that the quotient rings of real formal power series can also be given a real vector space structure from the ring operations on it. 

A function $\varphi: R \to L$ between rings $R$ and $L$ is a ring-homomorphism if $\forall x,y,a,b \in R, \varphi(ax + by) = \varphi(a)\varphi(x) + \varphi(b)\varphi(y)$ and $\varphi(1_R) = 1_{L}$. The \text{kernel of $\varphi$}, denoted Ker$(\varphi)$ is a subset of $R$ composed of the elements of $R$ that are mapped under $\varphi$ to $0$. Note that Ker$(\varphi)$ is an ideal of $R$. An isomorphism of rings is a ring-homomorphism that has an inverse ring-homormophism. Intuitively, two isomorphic rings are identical as rings, up to relabeling of their elements. We denote the existence of a ring-isomorphism between rings $R$ and $L$ by $R \cong L$. Note that ideals are non-unital rings under the operations of their parent ring, i.e., they are closed by addition and multiplication of their parent ring operations, but do not necessarily contain a multiplicative unit. Hence, one can also define a notion of homomorphism (or isomorphism) between ideals in the fashion as for rings, by foregoing the requirement that the unit is preserved under the homomorphism. For simplicity, we will also refer to homormophisms between ideals as ring-homomorphisms.

Real polynomials have a natural real vector space structure: they are trivially closed by addition (we have already observed that from their ring structure), but also by scalar multiplication by an element of $\mathbb{R}$, thus satisfying the properties of an $\mathbb{R}$-algebra. By the same reason, the real formal power series are an $\mathbb{R}$-algebra as well. Therefore, they are both rings and real vector spaces. 
We say that two $\mathbb{R}$-algebras $R$ and $L$ are homomorphic if there exists an $\mathbb{R}$-linear homomorphism $\psi: R \to L$ such that $\psi(xy)=\psi(x)\psi(y)$. The homomorphism is unital if it takes the multiplication unity element of $R$ to the multiplication unity of $L$. All algebra-homomorphisms that appear in this paper are unital. An isomorphism of $\mathbb{R}$-algebras (i.e., an $\Re$-algebra homomorphism with an inverse $\Re$-algebra homomorphism) is also denoted by $\cong$. We will distinguish from just a ring-isomorphism by explicit mention to it.  Note that an ideal $I$ of an $\mathbb{R}$-algebra that is closed by scalar multiplication is also a linear subspace of this algebra.

The ring of real formal power series is an example of a local ring: a ring that has exactly one maximal ideal (a maximal ideal is a proper ideal of a ring that is not properly contained in any other ideal). The maximal ideal of  $\mathbb{R}[[X_1,....,X_n]]$ is the finitely generated ideal $\mathfrak{m} = \langle X_1,...,X_n \rangle$. Note that if $f \notin \mathfrak{m}$, then $f$ can be inverted - i.e., one can explicitly write down a real formal power series $g$ which is the multiplication inverse of $f$. This simple fact is crucial for solving Example \ref{example}.

If a list of $m$ real formal power series in $n$ variables $p_1, ....,  p_m$ is such that for each $i$, $p_i$ has no constant term, then it follows that $\langle p_1,..., p_m \rangle \subset \mathfrak{m}$. In particular, if for each $i$, $p_i =0$ is a system of polynomials with $0$ as an isolated root, then each $p_i$ has no constant term. Therefore, $\langle p_1,....,p_m \rangle \subset \mathfrak{m}$.

For an ideal $I \subset \mathfrak{m} \subset \Re[[X_1,....,X_n]]$, $\Re[[X_1,...,X_n]]/I$ is a ring, but also inherits a real vector space structure from $\Re[[X_1,...,X_n]]$, which makes it into an $\Re$-algebra as well. Determining which class $[f]$ is non-zero in this quotient is equivalent to determining if $f \in I$. This is relevant, for example, if one wants to know if a linear combination in $\Re[[X_1,...,X_n]]/I$ is zero, or if one wants to construct a basis for the vector space $\Re[[X_1,...,X_n]]/I$ in order to establish its dimension, something that is intimately linked to the problem of computing the index of an isolated equilibrium (see Proposition \ref{auxprop}). For our goal in this paper, we will be interested in the case where $I$ is finitely generated by polynomials $p_1,....,p_m$. Therefore we need to establish that there exists $a_i \in \Re[[X_1,...,X_n]], i=1, ..., n$ such that $f = \sum_{i}a_i p_i$. Mora's normal form algorithm allows us to answer this question, by generalizing the known division algorithms for polynomials (see \cite{CLO2004}, ch.4, Corollary 3.14 and exercise 2, which shows one can perform division with formal power series as inputs). In order to do that, we need to define a local order on the set of monomials: an order $>$ on the set of monomials is a \textit{local order} if it is total, compatible with multiplication ($\a, \b, \g \in \mathbb{Z}^n_+, X^{\a} > X^{\b} \implies X^{\a + \g} > X^{\b + \g}$) and for each $i=1,...,n, 1 > X_i$. A particular local order we use is called the degree-anticompatible lex order: in this order $\a, \b \in \mathbb{Z}^n_+, X^{\a} >_{a\ell ex} X^{\b}$ if $$ |\a| = \sum_i \a_i < |\b| = \sum_i \b_i$$ or if 

$$ |\a| = |\b| \text{ and } X^{\a} >_{\ell ex} X^{\b}, $$ where  $>_{\ell ex}$ is some lexicographic order on monomials.

With a local order, one can define the leading term of a formal power series: if $f = \sum_{\a \in \mathbb{Z}^n_+}c_{\a}X^{\a}$, then its leading term LT$(f)$ is the $c_{\bar \a}X^{\bar \a}$ where $X^{\bar \a}$ is maximal under the local order $>$ over the monomials defining $f$. Given an ideal $I \subset \Re[[X_1,...,X_n]]$, the \textit{leading term ideal of I} $\langle LT[I] \rangle$ is the ideal generated by the leading terms of the power series in $I$. This is an example of a monomial ideal, an ideal that has a set of generators composed of monomials. The leading term ideal has a few crucial properties that allow us to quickly perform the verification of whether an index is $0$ or not: the first is given by Theorem 4.3 in \cite{CLO2004}, namely, when the dimension of $\Re[[X_1,...,X_n]]/I$ is finite, the dimension of $\Re[[X_1,...,X_n]]/I$ is equal to $\Re[[X_1,...,X_n]]/ \langle LT[I] \rangle$. Verifying that a polynomial is a member of a monomial ideal is usually a simpler task than doing it when the ideal does not have such a property, and we will explore precisely this in Example \ref{example}. In addition, the dimension of $\Re[[X_1,...,X_n]]/ \langle LT[I] \rangle$ is simply the number of standard monomials, i.e., those monomials $X^{\a} \notin \langle LT[I] \rangle$ (that is, monomials that are not divisible by any of the generators in $LT[I]$).

Note that when we consider real formal power series rings in one variable, there is only one local order ($1 > X > X^2 > ....$). The leading term of such a power series $f$ is then the term that has the smallest degree, also known as the order of $f$, denoted \text{ord($f$)}.

\subsection{Topological Preliminaries}

The main tools that will be used in this paper are the index and degree of equilibria. These are both specifications to the environment of Nash equilibria of the concepts of topological degree and index (cf. \cite{KR1994}). Section 2 in \cite{LP2023} reviews precisely the information required for this paper, explaining intuitively the roles of the structure theorem of Kohlberg and Mertens (Theorem 1 in \cite{KM1986}), degree theory and index theory in the problem of identifying payoff-robust equilibria, as well as formally presenting the definitions. We do not think that repeating this presentation here would be significantly more helpful relative to the cost of making the paper longer. We refer to that paper for these preliminaries. Other standard references are \cite{AM2018}, \cite{D1972} and \cite{RB2014}. 

Notationwise, the \textit{index of a fixed point $x$ of a map $f$} will be denoted ind$(x,f)$. The \textit{index of an equilibrium $\s^*$} refers to the index of the isolated equilibrium $\s^*$ of a game $G$ with respect to any Nash-map that has $\s^*$ as fixed point when the game $G$ is fixed (see subsection 2.2.2 in \cite{LP2023} for details on Nash-maps). For an open set $U \subset \Re^m$ such that the closure $\bar U$ contains a unique solution $x$ of the equation $p(x) = 0 \in \Re^m$, we denote by $\text{deg}_0(x,p)$ the \textit{topological degree of x w.r.t. to $p$}. When the point $x$ is understood, we omit reference to it and write $\text{deg}_0(p)$. Recall that the degree and the index of an isolated equilibrium are identical (the result actually holds for connected components of equilibria). For two players, the result is proved in \cite{GW1997}. The general result is proved in \cite{DG2000}.

The next proposition provides a formula for the computation of the absolute value of the topological degree of an isolated root of a polynomial system, which is directly based on the formula obtained by Eisenbud and Levine in Theorem 1.1 in \cite{EL1977}. In their formula, the $\mathbb{R}$-algebras considered are germs at $0$ of smooth real maps and the condition for the formula to hold is that the germ is finite. To make the application of the formula straightforward, it is worth recasting the formula in terms of $\mathbb{R}$-algebras of formal power series, and the condition in terms of finite dimensionality of the $\mathbb{R}$-algebras. The proposition is certainly known, but we include a proof here for completeness. In the proof of Theorem \ref{finalthm}, we employ another formula for the computation of the degree presented in Theorem 1.2 in \cite{EL1977}.



\begin{proposition}\label{auxprop}
Suppose $f: (\R^n,0) \to (\R^m, 0)$ is a polynomial map with an isolated root at $0$. Suppose $I = \langle f_1,...,f_m \rangle$ is an ideal of  $\mathbb{R}[[X_1,....,X_n]]$ and $\mathbb{R}[[X_1,....,X_n]] / I$ has finite dimension. Then $$\vert \text{deg}_0(0,f) \vert = \text{dim}_{\mathbb{R}}(\mathbb{R}[[X_1,....,X_n]] / I) - 2 \text{dim}_{\mathbb{R}} I_{max},$$ where $I_{max}$ is a maximal square-zero ideal of $\mathbb{R}[[X_1,....,X_n]] / I$.
\end{proposition}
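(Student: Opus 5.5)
The plan is to reduce the statement to Theorem 1.1 of \cite{EL1977}, which is phrased for the algebra of germs at $0$ of smooth functions, and then to move from germs to formal power series. Throughout we take $m=n$, as is implicit in the definition of $\text{deg}_0$. Let $\mathcal{E}_n$ denote the $\mathbb{R}$-algebra of smooth germs at $0$, and let $I_{\mathcal{E}} = \langle f_1,\dots,f_n\rangle \mathcal{E}_n$ be the ideal of germs generated by the components of $f$. In our notation, Eisenbud--Levine state the following. If $Q_{\mathcal{E}} = \mathcal{E}_n/I_{\mathcal{E}}$ is finite dimensional, then $|\text{deg}_0(f)|$ equals the signature of the symmetric bilinear form $\langle a,b\rangle = \phi(ab)$. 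Here $\phi$ is any linear functional with $\phi(\text{Jac}(f))>0$, and the signature is independent of this choice. They also show that this absolute value equals $\dim Q_{\mathcal{E}} - 2\dim J$, where $J$ is an ideal of $Q_{\mathcal{E}}$ maximal among those with $J^2=0$. The reason is that a maximal square-zero ideal is a maximal isotropic subspace for every such form, and the form is nondegenerate because $Q_{\mathcal{E}}$ is Gorenstein. So the whole task is to identify $Q_{\mathcal{E}}$ with $Q = \mathbb{R}[[X_1,\dots,X_n]]/I$ as $\mathbb{R}$-algebras. Such an identification carries square-zero ideals to square-zero ideals, preserves their dimensions, and preserves maximality.

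\textbf{Step 1.} Since $\dim_{\mathbb{R}} Q<\infty$, the descending chain $\mathfrak{m}^k Q$ stabilizes. By Nakayama's lemma applied to the local ring $Q$, we get $\mathfrak{m}^k \subset I$ for some $k$; this step uses that $I\subset \mathfrak{m}$, which holds because $f(0)=0$. Hence $Q$ is a quotient of the finite-dimensional algebra $\mathbb{R}[[X]]/\mathfrak{m}^k$, which is the same as polynomials truncated at degree $k$.

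\textbf{Step 2.} Consider the Taylor map $T:\mathcal{E}_n\to \mathbb{R}[[X]]$. It is a surjective algebra homomorphism by Borel's lemma, and its kernel consists of the flat germs, which lie in $\mathfrak{m}_{\mathcal{E}}^\infty$. Because $f$ is polynomial, $T$ maps $I_{\mathcal{E}}$ into $I$, so $T$ induces a surjection $Q_{\mathcal{E}}\to Q$.

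\textbf{Step 3.} Show this surjection is injective and that $Q_{\mathcal{E}}$ is finite dimensional. Once we know $\mathfrak{m}_{\mathcal{E}}^k\subset I_{\mathcal{E}}$, every germ is congruent modulo $I_{\mathcal{E}}$ to its Taylor polynomial of degree less than $k$. Then $Q_{\mathcal{E}}$ and $Q$ are both quotients of $\mathbb{R}[X]/\mathfrak{m}^k$ by the image of the same ideal generated by the $f_i$, and so they coincide.

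The main obstacle is exactly the claim $\mathfrak{m}_{\mathcal{E}}^k\subset I_{\mathcal{E}}$. I would prove it with Malgrange's preparation theorem. By Step 1, each monomial of degree $k$ lies in $I$. Write $X^\alpha = \sum_i a_i f_i$ with $a_i$ formal power series, truncate the $a_i$ at a high degree, and realize the truncations as polynomials. This gives $X^\alpha - \sum_i \tilde a_i f_i \in \mathfrak{m}^{N}$ for any prescribed $N$, so $\mathfrak{m}_{\mathcal{E}}^k \subset I_{\mathcal{E}} + \mathfrak{m}_{\mathcal{E}}^{N}$ for all $N$. In particular $\mathfrak{m}_{\mathcal{E}}^k \subset I_{\mathcal{E}} + \mathfrak{m}_{\mathcal{E}}\cdot\mathfrak{m}_{\mathcal{E}}^{k}$. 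Now $\mathfrak{m}_{\mathcal{E}}^k$ is finitely generated, and $\mathcal{E}_n$ is local, so Nakayama's lemma in $\mathcal{E}_n/I_{\mathcal{E}}$ gives $\mathfrak{m}_{\mathcal{E}}^k \subset I_{\mathcal{E}}$. This last step needs only Nakayama's lemma, not the full preparation theorem.

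Combining Steps 1–3 with the Eisenbud--Levine theorem yields the formula.
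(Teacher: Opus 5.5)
Your proof is correct. It reaches the key inclusion $\mathfrak{m}_{\mathcal{E}}^k\subset I_{\mathcal{E}}$, and the injectivity of the induced Taylor map, by a different route from the paper's.

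\textbf{The paper's route.} It identifies $\ker\bar T$ with $\ker T/(\ker T\cap I_S)$, so it has to show that flat germs lie in $I_S$. It gets this from $\mathfrak{m}_S^k\subset I_S$, which it proves by passing through the localized polynomial ring $\mathbb{R}[X]_{\mathfrak{m}}$. Standard-basis theory (Buchberger/Mora and Theorem 4.3 of \cite{CLO2004}) shows that $\mathbb{R}[[X]]/I$ and $\mathbb{R}[X]_{\mathfrak{m}}/I$ have the same standard monomials. Hence $X^\alpha=\sum_i (a_i/g)f_i$ with polynomial $a_i$ and $g(0)\neq 0$, and each $a_i/g$ is a smooth germ.

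\textbf{Your route.} You use Nakayama's lemma twice:
\begin{itemize}
\item in the finite-dimensional local ring $Q$, to get $\mathfrak{m}^k\subset I$;
\item in $\mathcal{E}_n$, after truncating the formal coefficients, to pass from $\mathfrak{m}_{\mathcal{E}}^k\subset I_{\mathcal{E}}+\mathfrak{m}_{\mathcal{E}}^{k+1}$ to $\mathfrak{m}_{\mathcal{E}}^k\subset I_{\mathcal{E}}$. Finite generation of $\mathfrak{m}_{\mathcal{E}}^k$ here is Hadamard's lemma, which you should cite explicitly.
\end{itemize}
You then identify both quotients directly with $\mathbb{R}[X]/(\mathfrak{m}^k+\langle f_1,\dots,f_n\rangle)$. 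This makes the kernel lemma and the discussion of flat germs unnecessary.

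\textbf{What each buys.} Your argument is more elementary and self-contained. It needs no standard-basis machinery and, as you note yourself, no preparation theorem, so the opening mention of Malgrange can be dropped. It also does not really use that the $f_i$ are polynomials. For smooth germs, take $I$ to be generated by their Taylor series and use Taylor's theorem in the truncation step. The paper's route instead ties the proposition to the standard-monomial computations it relies on in the later examples.

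\textbf{A small slip.} The Eisenbud--Levine signature formula (the one the paper cites as Theorem 1.2 of \cite{EL1977}) identifies the signature with $\deg_0 f$ itself, not with $|\deg_0 f|$. This does not affect your argument, since only the square-zero formula is needed.
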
 

\begin{proof} Let $S = C^{\infty}_0(\mathbb{R}^n)$ (the $\mathbb{R}$-algebra of germs of smooth functions at $0$) and $A = \mathbb{R}[[X_1,...,X_n]]$. We denote by $I_S$ the ideal the generated in $S$ by the $f_i$'s and analogously by $I_A$ the ideal generated in $A$ by the same $f_i$'s.  Let $T: S \to A$ be the ring homomorphism associating to each smooth function germ $f$ at $0$, its Taylor series at $0$. This map is surjective, by Borel's theorem. Hence $T$ induces a surjective map $\hat{T}: S \to \frac{A}{I_A}$ by composing $T$ with the partition map from $A$ to $\frac{A}{I_A}$. Now, since each generator $f_i$ is polynomial, $T(f_i) = f_i$, which implies that $T(I_S) \subset I_A$. Therefore, the induced map $\bar{T}: \frac{S}{I_S} \to \frac{A}{I_A}$ is still surjective. In order to prove that $\bar{T}$ is actually an isomorphism, we need to prove that Ker($\bar T$) $= \{0\}$. To do that we establish the following lemma.

\begin{lemma}\label{auxproof1} $Ker(\bar{T})$ is ring-isomorphic to $\frac{Ker(T)}{Ker(T) \cap I_S}$.\end{lemma}

\begin{proof}[Proof of the Lemma]We first show that Ker$(\bar T) \subset \frac{Ker(T) + I_S}{I_S}$: let $[g] \in \text{Ker}(\bar T)$. Then $[Tg] = [0] \iff Tg \in I_A$. Therefore, $Tg = \sum^{r}_{i=1}a_if_i, (a_i \in A)$. By surjectivity of $T$, we can choose smooth germs $b_i \in S$ with $Tb_i = a_i$. Therefore, $T(g - \sum^{r}_{i=1}b_if_i) = Tg - \sum^{r}_{i=1}a_if_i =0$. Therefore, $g - \sum^{r}_{i=1}a_if_i \in \text{Ker}(T)$. So $g \in Ker(T) + I_S$, which implies that $[g] \in \frac{Ker(T) + I_S}{I_S}$. This shows that $Ker(\bar T) \subset  \frac{Ker(T) + I_S}{I_S}$. The converse is straightforward. Now let $\varphi: Ker(T) \to \frac{Ker(T) + I_S}{I_S}$ given by $f \mapsto [f]$. This map is a surjective homomorphism since for $f \in Ker(T), h \in I_S$, $f + h - f = h \in I_S$. Moreover, Ker($\varphi$) $=I_S \cap Ker(T)$. Therefore $\varphi: \frac{Ker(T)}{Ker(T) \cap I_S} \cong \frac{Ker(T) + I_S}{I_S} = Ker(\bar T)$. \end{proof} 

Given Lemma \ref{auxproof1}, $Ker(\bar T) \cong \frac{\text{Ker}(T)}{Ker(T) \cap I_S} = \{0\}$ if and only if $Ker(T) \subset I_S$. 

We now establish that $Ker(T) \subset I_S$. First recall that Ker$(T)$ comprises the flat germs, that is, germs of smooth functions at $0$ such that the derivatives at all orders vanish at $0$. Also recall that $S$ is a local ring with maximal ideal denoted by $\mathfrak{m}_S$ generated by $X_1, \ldots, X_n$. Using the known fact that $$ Ker(T) \subset \bigcap_{k \geq 1} \mathfrak{m}^k_S,$$ it is sufficient to prove that there exists $k \in \mathbb{N}$ for which $\mathfrak{m}^k_S \subset I_S$. Let $R$ denote the localized ring of polynomials $\mathbb{R}[x_1,...,x_n]_{\mathfrak{m}}$. By assumption, as $\text{dim}_{\mathbb{R}}\frac{A}{I_A}$ is finite, Buchberger's algorithm and Theorem 4.3 in \cite{CLO2004}, imply that $\frac{A}{I_A}$ and $\frac{R}{I_R}$ have the same dimension as real vector spaces and have the same standard monomials as well. This implies that there exists $k \in \mathbb{N}$ such that $\mathfrak{m}_{R}^{k} \subset I_R$. Let now $X^{\alpha}$ be a monomial such that $|\alpha| =k$. Recall that such monomials generate $\mathfrak{m}^k_S$ and $\mathfrak{m}^k_R$. Since $X^{\alpha} \in I_{R}$, then $X^{\alpha} = \frac{f}{g}$, $f \in J, g \in R \setminus \mathfrak{m}_R$, and $J \subset \mathbb{R}[X_1,...,X_n]$ is the ideal generated by $f_i, i=1, \ldots, m$. Therefore, $\frac{f}{g}$ can be written as $\sum^m_{i=1}\frac{a_i}{g}f_i, a_i \in \mathbb{R}[x_1,...,x_n]$. This implies $\frac{f}{g} \in I_S$,  since $\frac{a_i}{g}$ represents a smooth function germ at $0$. Hence,  $\mathfrak{m}^k_S \subset I_S$ thus concluding the proof that $\bar{T}$ is a ring-isomorphism. Note that, in addition, $\bar T$ is $\Re$-linear, so $\bar{T}$ is an isomorphism of $\Re$-algebras. This implies that both $S/I_S$ and $A/I_A$ have the same dimension. In particular, the maximal square-zero ideal $J_{max}$ of $\frac{S}{I_S}$ maps under $\bar{T}$ to the maximal square-zero ideal $I_{max}$ of $\frac{A}{I_A}$, with both having the same dimension as well. Applying then Theorem 1.1 in \cite{EL1977}, the result follows.\end{proof} 


\begin{remark} If $0$ is isolated as a complex root of $f$, then one can show that $\mathbb{R}[[X_1,....,X_n]] /I$ has finite dimension - thus making the assumption of finite real-dimension of  $\mathbb{R}[[X_1,....,X_n]] / I$ redundant. However, if $0$ is isolated as a real root, but not as a complex root, the assumption of finite dimension of $\mathbb{R}[[X_1,....,X_n]] /I$ is still required to obtain Proposition \ref{auxprop}: take as an example $f(x,y) = ((x^2+y^2)x, (x^2+y^2)y)$. The point $0$ is an isolated real root (but not as a complex root) and $\mathbb{R}[[X_1,....,X_n]] /I$ has infinite dimension. \end{remark} 


\section{Index and Robustness}\label{SEC3}

Let $G = (\mathcal{N}, (S_n)_{n \in \mathcal{N}}, (G_n)_{n \in \mathcal{N}})$ denote a finite game, where $\mathcal{N} = \{1,...,N\}$ denotes the set of players, $S_n = \{s^0_{n},...,s^{k_n}_{n}\}$ denotes the set of pure strategies of player $n$, with $S = \times_{n \in \N}S_n$,  and $G_n: S \to \mathbb{R}$ denotes the payoff function of player $n$. To avoid notational clutter, the mixed extension of the payoff function of player $n$ is denoted by $G_n$ as well, and the mixed strategy set of player $n$ is denoted by $\S_n$, with $\S = \times_{n \in \mathcal{N}}\S_{n}$. The interior of $\S$ relative to the affine space generated by $\S$ is denoted by int$(\S)$. For each $\s$, let $G^n(\s) = (G_n(s_n, \s_{-n}))_{s_n \in S_n}$. Let $\s^*$ be an isolated completely mixed equilibrium. For each $s_n \in S_n$, let $e_{s_n}$ be the canonical vector of $\Re^{S_n}$ with $1$ in the $s_n$-th position.  Let $q: \R^{\kappa+N} \to \R^{\kappa + N}$ be a polynomial map defined by its coordinate maps as follows: $q_{s^{i}_n}(\s) = G_n(e_{s^0_n}, \s_{-n}) - G_n(e_{s^i_n}, \s_{-n})$, $i=1,...,k_n$, and $q_{0,n}(\s) = 1 - \sum^{k_n}_{i=0}\s_{s^i_n}$, $\forall n \in \N$, with $\kappa = \sum_{n \in \mathcal{N}} k_n$. Using the fact that $\s_n$ is a probability distribution, we can eliminate the variable $\s_{s^0_n}$ by setting $\s_{s^0_n} = 1 - \sum^{k_n}_{i=1} \s_{s^i_n}$. Now, consider the change of variables $\s_{s^i_n}\mapsto X_{s^i_n} + \s^*_{s^i_n}$ and let $p: \mathbb{R}^\kappa \to \mathbb{R}^\kappa$ be the associated map with an isolated root at zero. This is a multiaffine polynomial with an isolated root at $0$. 

An isolated equilibrium $\s$ is called \textit{payoff-robust} if for any neighborhood $U \subset \S$ of $\s$, there exists a neighborhood $V \subset \Re^{S}$ of $G$ such that for any $G' \in V$ there exists an equilibrium $\s' \in U$ of $G'$. Non-zero index and non-zero degree equilibria are immediately robust to payoff perturbations. This is due to \cite{KR1994} (see section 2 in \cite{LP2023} for a discussion).

We need a specific result on the degree and index of equilibria for the proof of part (b) of Theorem \ref{mainthm1}. Let $r_n: \mathbb{R}^{S_n} \to \S_n$ denote the closest-point retraction from a point $z_n \in \Re^{S_n}$ to a point in $\S_n$, and let $r = \times_{n}r_n$. Letting $\mathcal{E}_G = \{ (g,\s) \in \Re^{\kappa + N} \times \S \mid \s \text{ is an equilibrium of } G \oplus g  \}$, where $G \oplus g$ is a finite game, with the mixed strategy set of each player $n$ as $\S_n$ and payoffs defined as  $G_n(\s) + g_n$. Define $\theta_G: \mathcal{E}_G \to \Re^{\kappa + N}$ by $\theta_G(g, \s) = (\s_n + G^n(\s) + g_n)_{n \in \N}$. Using a similar argument as in Kohlberg and Mertens, one can obtain that $\theta_G$ is a homeomorphism, with inverse $\theta_G^{-1}(z) = (g(z), r(z))$, where $g_n(z) = z_n - r_n(z_n) - G^{n}(r_{-n}(z_{-n}))$, and $g = \times_n g_n$. Letting proj: $\mathcal{E} \to \Re^{\kappa + N}$ denote the projection over the payoffs $g$, for each isolated equilibrium $\s^*$ of $G$, and $U \subset \Re^{\kappa + N}$ a neighborhood of $\theta_G(0,\s^*) = z^*$ containing  in its closure no other $z$ such that $\theta_G(0,\s) = z$, the topological degree over $0$ of $\text{proj} \circ \theta^{-1}_G: U \to \Re^{\kappa + N}$ is the index of $\s^*$ (see \cite{GW2005} \textit{Formulation} and \textit{Appendix A} for a detailed explanation of this fact).

\begin{proposition}\label{degindexequal} Given any finite game $G$ with $\s^*$ a completely mixed and isolated equilibrium of $G$, it follows that the topological degree of $0$ with respect to $p$ equals the index of $\s^*$. \end{proposition}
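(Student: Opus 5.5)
The plan is to compute the index of $\s^*$ as the local degree over $0$ of $\text{proj} \circ \theta^{-1}_G$ near $z^* = \theta_G(0,\s^*)$, as recalled just before the statement. I will then show that, after orientation-compatible changes of coordinates in the domain and the target, this map is locally the product of $p$ with an identity map. The product (multiplicativity) property of the degree then gives the equality with $\text{deg}_0(0,p)$.

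\textbf{Step 1 (localize the retraction).} Since $\s^*$ is completely mixed, for $z$ in a small neighborhood $U$ of $z^*$ each $r_n(z_n)$ lies in the relative interior of $\S_n$. There $r_n$ is the orthogonal projection onto the affine hull $\{\s_n : \sum_i \s_{s^i_n} = 1\}$. Hence $z_n - r_n(z_n) = c_n \mathbf{1}$ with $c_n = \frac{1}{k_n+1}\big(\sum_i z_{s^i_n} - 1\big)$.

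\textbf{Step 2 (new domain coordinates).} The map $z \mapsto (\s, c)$, where $\s = r(z)$ is written in the free coordinates $(\s_{s^i_n})_{i \ge 1}$ and $c = (c_n)_n$, is an affine isomorphism of $\Re^{\kappa+N}$. In these coordinates
$$g_{s^i_n}(\s,c) = c_n - G_n(e_{s^i_n}, \s_{-n}), \qquad i = 0, \ldots, k_n .$$
Next I substitute $d_n = c_n - G_n(e_{s^0_n}, \s_{-n})$. The map $(\s,c) \mapsto (\s,d)$ is a triangular diffeomorphism with Jacobian determinant $1$, so it preserves local degree. This gives $g_{s^0_n} = d_n$ and $g_{s^i_n} = d_n + q_{s^i_n}(\s)$ for $i \ge 1$, with $q$ as defined before the statement.

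\textbf{Step 3 (new target coordinates).} I apply the linear automorphism $L$ of the target given by $g_{s^i_n} \mapsto g_{s^i_n} - g_{s^0_n}$ for $i \ge 1$, keeping $g_{s^0_n}$ fixed. It has determinant $1$. The composite map becomes $(\s,d) \mapsto (d, q(\s))$. Translating $\s$ by $\s^*$ turns $q(\s)$ into $p(X)$. The zeros of this map near $(\s^*, d^*)$ are exactly the indifference solutions, and $0$ is an isolated root of $p$ because $\s^*$ is an isolated equilibrium. The product formula then gives local degree $\text{deg}(\text{id}_{\Re^N}) \cdot \text{deg}_0(0,p) = \text{deg}_0(0,p)$, up to the signs of the linear changes of coordinates.

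\textbf{Main obstacle: orientation bookkeeping.} The main difficulty is tracking orientations so that no stray sign $\pm 1$ survives. The affine isomorphism $z \mapsto (\s,c)$ in Step 2 has some determinant sign $\epsilon$ that depends on the coordinate orderings. The reordering that puts $(d, q)$ into the same order as $(\s, d)$ also contributes a permutation sign. I would fix the orderings so that the domain $(\s,c)$ and the target coordinates are listed player by player in matching blocks, and then check blockwise that the two signs coincide, so that they cancel. Writing $z_n = \s_n + c_n \mathbf{1}$ and comparing with the target block $(g_{s^0_n}, g_{s^i_n} - g_{s^0_n})$ gives matched block determinants. If a sign discrepancy genuinely remains, it is a fixed sign depending only on the $k_n$. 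I would resolve it by testing a game with a nondegenerate completely mixed equilibrium, for example matching pennies (index $+1$, as required by the sign conventions in \cite{LP2023}).
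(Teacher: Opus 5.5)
Your proof is correct, but it takes a different route from the paper's. The paper's proof does not use the Kohlberg--Mertens map. It takes the index to be the fixed-point index of Nash's map $h$ and proceeds in steps:
\begin{enumerate}
\item it introduces an auxiliary map $f$, which is Nash's map with the $\max$ removed and $s^0_n$ as reference strategy;
\item it shows, by a best-reply case analysis, that a homotopy from $f$ to $h$ has no fixed point other than $\s^*$;
\item it moves $f$ to the free coordinates $X$ using commutativity of the index;
\item it removes the normalizing denominator by a second homotopy, after which the displacement is exactly $p$.
\end{enumerate}

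You instead start from the Govindan--Wilson formulation recalled just before the statement. You then produce an exact local change of coordinates in source and target, $L\circ(\text{proj}\circ\theta_G^{-1})\circ\Psi^{-1}=\text{id}_{\mathbb{R}^N}\times p$, up to the translation by $\s^*$ and a simultaneous permutation of coordinates. So no admissibility arguments are needed.

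The orientation issue you flag resolves without any case distinction. Put $c_n$ in the $s^0_n$-slot. Then the block Jacobian of $(c_n,\s_{s^1_n},\dots,\s_{s^{k_n}_n})\mapsto z_n$ is $\begin{pmatrix}1 & -\mathbf{1}^{T}\\ \mathbf{1} & I_{k_n}\end{pmatrix}$, with determinant $k_n+1>0$. Both $(\s,c)\mapsto(\s,d)$ and $L$ are unipotent, so no stray sign survives. That matters, because your matching-pennies fallback would only have fixed the sign for the single size profile $(k_1,k_2)=(1,1)$, not for all $(k_n)$.

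On what each route buys:
\begin{itemize}
\item The paper's argument stays inside fixed-point index theory. It needs neither the structure theorem nor the identification of the structure-theorem degree with the index (Demichelis--Germano).
\item Yours relies on that identification, but it is shorter and gives more than equality of degrees. It shows the germ of $\text{proj}\circ\theta_G^{-1}$ at $z^*$ is equivalent to $\text{id}\times p$. That yields at once the isomorphism $\mathbb{R}[[X]]/\langle p\rangle\cong\mathbb{R}[[Z]]/\langle T(Z)\rangle$ and the equivalence of local surjectivity of $T$ and $p$. The paper re-derives both by elimination in the proof of Theorem \ref{mainthm1}(b).
\end{itemize}
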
 

\begin{proof}Define $$f_{s_n}(\s) = \frac{\s_{s_n} + G^{n}(\s) \cdot (e_{s_n} - e_{s^0_n})}{1 + \sum_{\tilde s_n \in S_n}G^n(\s)\cdot(e_{\tilde s_n} - e_{s^0_n})}.$$ Let $f_n = \times_{s_n \in S_n} f_{s_n}$, and $f = \times_n f_n$. Let $U$ be a closed neighborhood of $\s^*$ in int$(\S)$ such that $f(U) \subset \S$, and such that it contains no equilibria of $G$ besides $\s^*$. Then $f: U \to \S$ is well-defined and is such that $\s \in U$ is a fixed point of $f$ if and only if $\s = \s^*$. 

Let $$h_{s_n}(\s) = \frac{\s_{s_n} + \text{max} \{0, G^{n}(\s) \cdot (e_{s_n} - \s_n)\} }{1 + \sum_{\tilde{s}_n \in S_n}\text{max} \{0, G^{n}(\s) \cdot ( e_{\tilde{s}_n} - \s_n)\}},$$ 

with $h_n = \times h_{s_n}: \S \to \S$, and $h = \times_n h_n$, be the map used in \cite{JN1951} to prove existence of Nash equilibria. In what follows, a homotopy $H: [0,1] \times U \to \S$ between $H(0, \cdot ) = f: U \to \S$ and $H(1, \cdot) = h: U \to \S$ is called admissible if the boundary of $U$ contains no fixed points of the homotopy, i.e., for each $\s \in \partial U$ and $t \in [0,1]$, $H(t,\s) \neq \s$.

\begin{lemma}\label{smalemma}There is an admissible homotopy between $f: U \to \S$ and $h: U \to \S$. Therefore, $\text{ind}(\s^*, f) = \text{ind}(\s^*, h)$. \end{lemma}

\begin{proof}[Proof of the Lemma] Let $\phi_{s_n}(\s) = G^{n}(\s)\cdot(e_{s_n} - e_{s^0_n})$ and $q_{s_n}(\s) = \text{max}\{0, G^n(\s)\cdot (e_{s_n} - \s_n)\}$. Adjusting the neighborhood $U$ if necessary to a smaller one around $\s^*$, define the homotopy $H: [0,1] \times U \to \S$,  $$H_{s_n}(t,\s) = \frac{\s_{s_n} + (1-t)\phi_{s_n}(\s) + tq_{s_n}(\s)}{1 + \sum_{s_n \in S_n}(1-t)\phi_{s_n}(\s) + tq_{s_n}(\s)}.$$ Let $0<t<1$. Suppose by contradiction $\s$ is a fixed point of the homotopy distinct from $\s^*$. Therefore, $\s$ is not a Nash-equilibrium of $G$. 

Fix $n \in \mathcal{N}$ such that $\s_n$ is not a best-reply to $\s$. Suppose first $s^0_n$ is a best-reply to $\s$ in $G$. Then $\phi_{s^0_n}(\s) =0$ and $q_{s^0_n}(\s)>0$. Therefore, $\sum_{s_n}(1-t)\phi_{s_n}(\s) + tq_{s_n}(\s)>0.$ There exists now $\bar s_n \in S_n$ a pure strategy that yields an inferior payoff against $\s_{-n}$ than $\s_n$ (since $\s_n$ is completely mixed). Therefore, $\phi_{\bar s_n}(\s) < 0$ and $q_{\bar s_n}(\s) =0$, thus implying $\sum_{s_n}(1-t)\phi_{ s_n}(\s) + tq_{ s_n}(\s)<0,$ which is a contradiction. Therefore, $s^0_n$ cannot be a best-reply to $\s$. Let now $\hat s_n$ be a best-reply to $\sigma$, which implies that $\sum_{s_n}(1-t)\phi_{s_n}(\s) + tq_{s_n}(\s)>0.$ Let now $\tilde s_n$ yield a payoff against $\s$ that is at most as good as $s^0_n$ and inferior to $\s_n$. This implies then $\sum_{s_n}(1-t)\phi_{s_n}(\s) + tq_{s_n}(\s)<0$. Contradiction. Therefore, the only fixed point of $H(t, \cdot)$ is $\s^*$ and the homotopy is in particular admissible.\end{proof}


Let $A^0_n = \{ X_n = (X_{s_n})_{s_n \in S_n} \mid \text{ for each $n \in \mathcal{N}$, } \sum_{s_n \in S_n}X_{s_n} =0 \}$, and $A^0 = \times_n A^0_n$. Let $F_n$ denote the affine space generated by $\S_n$. Consider the homeomorphism $T_n: A^0_n \to F_n$, where $T_n = \times_{s_n}T_n$, $T_{s_n}(X_n) = X_{s_n} + \s^{*}_{s_n}$. Let $V = T^{-1}(U)$. Let $T = \times_n T_n$. Define $\hat f = T^{-1} \circ f \circ T: V \to A^0_n$. By the commutativity property of the index, $\text{ind}(0, \hat f) = \text{ind}(\s^*, f)$. Note now that for each $s_n \in S_n$,

$$\hat f_{s_n} (X) =  \frac{T_{s_n}(X) + G^{n}(T(X)) \cdot (e_{s_n} - e_{s^0_n})}{1 +\sum_{\tilde s_n \in S_n}[G^{n}(T(X)) \cdot (e_{\tilde s_n} - e_{s^0_n})]} - \s^*_{s_n}.$$

Consider now the map $S: \Re^{\kappa} \to A^0$ given by $S = \times_n S_n$ with $$S_n(X_{s^1_n}, \ldots, X_{s^{k_n}_n}) = (- \sum^{k_n}_{i=1} X_{s^i_n}, X_{s^1_n}, \ldots, X_{s^{k_n}_n}).$$ The map is a homeomorphism with inverse given by $S^{-1}$. Define $\hat h = S^{-1} \circ \hat f \circ S$. Again, by the commutativity property of the index, it follows that $\text{ind}(0, \hat h) = \text{ind}(0, \hat f)$. Let now $W = S^{-1}(V)$. Consider the homotopy $Q: [0,1] \times W \to \mathbb{R}^{\kappa}$ given by: $$Q_{s_n}(t, X) =   \frac{T_{s_n}(X) + G^{n}(T(X)) \cdot (e_{s_n} - e_{s^0_n})}{1 + (1-t)\sum_{s_n \in S_n}[G^{n}(T(X)) \cdot (e_{s_n} - e_{s^0_n})]} - \s^*_{s_n}.$$

A similar argument to the one in the proof of Lemma \ref{smalemma} yields that $0$ is the only fixed point of the homotopy which then implies that ind$(0, \hat h) =$ ind$(0, Q(1,\cdot)$). Now, notice that $p_{s_n}(X) = X_{s_n} - Q_{s_n}(1,\cdot)$, so the displacement of $Q(1, \cdot)$ equals $p$. Since the index of $0$ w.r.t. $Q(1, \cdot)$ equals the topological degree of $0$ w.r.t. its displacement (by definition), then the result follows. \end{proof}

\begin{definition} The game-equilibrium pair $(G, \s^*)$ is \textit{monogenic} if $\s^*$ is an isolated completly mixed equilibrium of $G$ and $Jp(0)$ has rank $\geq \kappa-1$. \end{definition}

The definition of a monogenic game-equilibrium pair encompasses nongeneric cases of isolated completely mixed equilibria, where the Jacobian matrix of $p$ might lose full rank by at most one at the root. Generically, in normal-form games, all equilibria are isolated and have indices $+1$ or $-1$ only. Extensive-form games (even with generic terminal payoffs) might have non-isolated equilibria (we discuss the application of our results in these cases in section \ref{SEC4}). Theorem \ref{mainthm1} shows that when the Jacobian loses full rank by at most one, only one new possibility of index arises for isolated completely mixed equilibria relative to the generic case, namely, the index $0$ (this is result $(a)$ in Theorem \ref{mainthm1}). We also show in item $(b)$ of the same theorem that in the monogenic case, non-zero index fully characterizes payoff-robustness, sharpening the characterization of non-zero index components in \cite{PP2025} and \cite{GW2005} on which we commented in the introduction.

\begin{theorem}\label{mainthm1}  Suppose $(G, \s^*)$ is monogenic. The following statements hold: 
\begin{enumerate}

\item[(a)]  The index of $\s^*$ is either $0, +1$ or $-1$;

\item[(b)] $\s^*$ has index $0$ if and only if $\s^*$ is not payoff-robust.   
\end{enumerate} 
\end{theorem}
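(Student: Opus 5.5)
The plan is to reduce both parts to a one-variable problem by splitting off the nondegenerate directions of $p$ at $0$ (a Lyapunov--Schmidt / splitting-lemma reduction). Then read off the degree from Proposition \ref{auxprop}, and transfer it to the index of $\s^*$ via Proposition \ref{degindexequal}.

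For (a), first dispose of the case $\operatorname{rank} Jp(0)=\kappa$: the degree is $\operatorname{sign}\det Jp(0)=\pm 1$, hence so is the index. Now suppose the rank is exactly $\kappa-1$.
\begin{enumerate}
\item[(i)] After linear changes of coordinates in source and target, write $X=(x,y)$ with $x\in\Re^{\kappa-1}$ and $y\in\Re$, and $p=(p',p_\kappa)$ with $\partial_x p'(0)$ invertible.
\item[(ii)] Since $p$ is polynomial, the analytic implicit function theorem gives an analytic $x=\xi(y)$ solving $p'(\xi(y),y)=0$. Set $g(y)=p_\kappa(\xi(y),y)$.
\item[(iii)] The ideal $I=\langle p_1,\dots,p_\kappa\rangle\subset\Re[[X]]$ equals $\langle x-\xi(y),\, g(y)\rangle$. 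Indeed, $p'=M(X)\,(x-\xi(y))$ with $M(0)$ invertible, hence a unit, by a Hadamard-type division in the power series ring; and $p_\kappa\equiv g(y)$ modulo $x-\xi(y)$.
\item[(iv)] Hence $\Re[[X]]/I\cong\Re[[y]]/\langle g\rangle$.
\item[(v)] $g$ is not the zero series. Otherwise the analytic curve $(\xi(y),y)$ would consist of roots of $p$, contradicting isolatedness of $\s^*$ (the roots of $p$ near $0$ are exactly the equilibria near $\s^*$, since $\s^*$ is interior).
\item[(vi)] So $g=u\,y^d$ with $u$ a unit and $d=\operatorname{ord}(g)\ge 2$, and $\Re[[X]]/I\cong\Re[[y]]/\langle y^d\rangle$ is finite-dimensional of dimension $d$.
\item[(vii)] In this algebra, every square-zero ideal is generated by some $y^j$ with $2j\ge d$. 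A maximal one is therefore $\langle y^{\lceil d/2\rceil}\rangle$, of dimension $\lfloor d/2\rfloor$.
\end{enumerate}
Proposition \ref{auxprop} then gives $|\deg_0 p|=d-2\lfloor d/2\rfloor\in\{0,1\}$. Proposition \ref{degindexequal} turns this into the claim on the index.

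For (b), the direction ``non-zero index $\Rightarrow$ payoff-robust'' is the standard fact recalled before the theorem. For the converse, assume index $0$. By (a) we are in the rank $\kappa-1$ case with $d$ even, and I will construct arbitrarily small payoff perturbations with no equilibrium near $\s^*$.
\begin{enumerate}
\item[(i)] Adding a constant $\e c_{s^i_n}$ to player $n$'s payoff from $s^i_n$, independently of the opponents' play, changes $q_{s^i_n}$ and hence $p_{s^i_n}$ by the constant $-\e c_{s^i_n}$. So every constant perturbation $p\mapsto p-\e c$, $c\in\Re^\kappa$, is realized by an arbitrarily small payoff perturbation.
\item[(ii)] Any equilibrium of the perturbed game close to $\s^*$ is completely mixed, so it is a root of $p-\e c$.
\item[(iii)] Choose $c\notin\operatorname{Im}Jp(0)$ and redo the reduction above with parameter $\e$: solve the first $\kappa-1$ equations for $x=\xi(y,\e)$, uniformly on a fixed neighborhood.
\item[(iv)] The reduced equation is $h(y,\e)=u(y)y^d-\e\,(a+O(|y|+|\e|))=0$, where $a\neq 0$ precisely because $c$ is transverse to the image.
\item[(v)] Since $d$ is even, $u(0)y^d$ has the constant sign of $u(0)$ near $0$. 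Choose the sign of $\e$ so that $-\e a$ also has that sign.
\item[(vi)] Then $h(y,\e)$ has no zeros for $|y|\le\d$ and small $|\e|$, so no equilibria of the perturbed game lie in a fixed neighborhood of $\s^*$. Hence $\s^*$ is not payoff-robust.
\end{enumerate}

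The main obstacle is the uniformity in step (vi): the neighborhood of $y=0$ must stay fixed while $\e\to 0$. The $O(|y|+|\e|)$ error term could in principle flip the sign near $y\approx 0$. I would handle this by shrinking $\d$ so that $|O(\cdot)|<|a|/2$ on the relevant box, and then noting that both summands have the same sign there. A secondary check is the identification $I=\langle x-\xi,g\rangle$ in formal power series, including the finite-dimensionality hypothesis of Proposition \ref{auxprop}. Both follow once $g\not\equiv 0$ is established via analyticity and isolatedness.
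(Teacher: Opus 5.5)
Your proof is correct. Part (a) follows essentially the paper's route. Lemma \ref{lemmaux} performs the same corank-one reduction, using the inverse function theorem for $\Psi=(p_1,\dots,p_{\kappa-1},x_\kappa)$ where you use the implicit function theorem, to obtain $\Re[[X]]/I\cong\Re[[w]]/\langle g\rangle$. It then combines Proposition \ref{auxprop} with the fact that one-variable local degrees lie in $\{0,\pm1\}$. Two things you add make this step self-contained: you compute the maximal square-zero ideal $\langle y^{\lceil d/2\rceil}\rangle$ explicitly, and you check that $g\not\equiv 0$, which the paper leaves implicit when it asserts finite dimensionality.

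Part (b) is where you genuinely differ. The paper goes through the Kohlberg--Mertens homeomorphism $\theta_G$:
\begin{enumerate}
\item it introduces the ideal $J$ in variables $(Z,Y)$;
\item it sketches a chain of algebra isomorphisms $\Re[[Y]]/\langle p\rangle\cong\Re[[Z,Y]]/J\cong\Re[[Z]]/\langle T(Z)\rangle$, where $T$ is the local polynomial form of $\mathrm{proj}\circ\theta_G^{-1}$ near $\hat z$;
\item it transfers corank one and even order to $T$;
\item it reruns the sign argument of Lemma \ref{lem2} to show that $\mathrm{proj}\circ\theta_G^{-1}$ is not locally surjective at $\hat z$.
\end{enumerate}
You instead observe that constant shifts $p\mapsto p-\e c$ are realized by small strategy-specific payoff bonuses. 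These are exactly the perturbations $G\oplus g$ the paper itself uses. You also note that, near the interior point $\s^*$, the equilibria of the perturbed game are exactly the roots of $p-\e c$. Hence the non-local-surjectivity of $p$ itself, which Lemma \ref{lem2} already establishes, suffices. Your parametric reduction with $c\notin\mathrm{Im}\,Jp(0)$ on a fixed box then makes the required uniformity explicit.

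Your route is shorter and avoids both $\theta_G$ and the only-sketched elimination of the $(Z,Y)$ variables. It also exhibits an explicit destabilizing direction and sign. The paper's route instead identifies the local algebra of $p$ with that of the structure-theorem projection, which ties the computation directly to the degree-theoretic definition of the index.
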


\begin{proof}  We start with the proof of item (a) of the theorem. For that we prove an auxiliary lemma. Let $I = \langle p_1,...,p_\kappa \rangle$ be the ideal generated by the polynomial map $p$ associated to $\s^*$. 

\begin{lemma}\label{lemmaux} The pair $(G, \s^*)$ is monogenic if and only if there exists a one-variable real power series $g(w)$ with $g(0) = 0$ such that $\mathbb{R}[[X_1,....,X_\kappa]]/I  \cong \mathbb{R}[[w]]/ \langle g(w) \rangle$ as $\mathbb{R}$-algebras. \end{lemma}

Notice that, provided the lemma holds, statement (a) is almost immediate: given the $\mathbb{R}$-algebra isomorphism expressed in the lemma, it follows immediately that $\mathbb{R}[[X_1,....,X_\kappa]]/I$ is finite dimensional. By Proposition \ref{auxprop} we then have that $\text{deg}_0(p) = \text{dim}_{\mathbb{R}}\mathbb{R}[[w]]/\langle g(w)\rangle - 2 \text{dim}_{\mathbb{R}} I^w_{max} = \text{deg}_0(g)$, where $I^w_{max}$ is a maximal square-zero ideal in $\mathbb{R}[[w]]/ \langle g(w) \rangle$. Now, it is known (see for example \citet{BO1953}) that in the case of one-dimensional real maps the local degrees can only be $0$, $+1$ or $-1$, which concludes the proof of item (a).

\begin{proof}[Proof of Lemma \ref{lemmaux}]Suppose $(G, \s^*)$ is monogenic. After a change of variables, without loss of generality we can assume that the submatrix of the Jacobian of $p$ at $0$ given by $\frac{\partial (p_1,...,p_{\kappa-1})}{\partial (x_1,...,x_{\kappa-1})}(0)$ is invertible. Define $\Psi : \mathbb{R}^{\kappa} \to \mathbb{R}^{\kappa}$ as $\Psi(x) = (p_1(x),...,p_{\kappa-1}(x), x_{\kappa})$. The Jacobian of $\Psi$ at $0$ is then invertible. By the inverse function theorem, $\Psi^{-1}|_{U} = \Phi: U \to V$ is an analytic local diffeomorphism over neighborhoods $U$ and $V$ of $0$. Let now $u = (u_1,...,u_{\kappa-1})$ and define on $U$ the map $H(u,w) = p(\Phi(u,w)) = (u, h(u,w))$, where $h$ is analytic by construction. The analytic diffeomorphism $\Phi$ now induces an isomorphism of $\mathbb{R}$-algebras $\mathbb{R}[[X_1,....,X_\kappa]]/\langle p_1,...,p_\kappa \rangle \cong \mathbb{R}[[u,w]]/\langle u, h(u,w) \rangle$ (one defines the pullback isomorphism $\Phi^{\#}: \mathbb{R}[[X_1,...,X_{\kappa}]] \to \mathbb{R}[[u,w]]$ which takes $X_i \mapsto \Phi_i(u,w)$ and for each $F \in \mathbb{R}[[X_1,...,X_{\kappa}]]$, $\Phi^{\#}(F) = F(\Phi(u,w))$. Computing the pullback on a generator $p_i$  gives $\Phi^{\#}(p_i) = p_i(\Phi(u,w))$, which implies $\Phi^{\#}(I) = \langle u, h(u,w) \rangle$. Passing then the isomorphism to quotients gives the result. Note that the isomorphism constructed is also $\mathbb{R}$-linear, so it is a $\mathbb{R}$-algebra isomorphism).

Each class of $\mathbb{R}[[u,w]]/\langle u, h(u,w) \rangle$ can be represented by a power series in $w$ only, since each variable in $u$ is zeroed by the ideal $\langle u, h(u,w) \rangle$. This implies that we have an $\mathbb{R}$-algebra isomorphism $\mathbb{R}[[u,w]]/\langle h(0,w) \rangle \cong \mathbb{R}[[u,w]]/\langle u, h(u,w) \rangle$. Therefore we have obtained that $$\mathbb{R}[[X_1,....,X_\kappa]]/\langle p_1,...,p_\kappa \rangle \cong \mathbb{R}[[w]]/ \langle g \rangle,$$ where $g = h(0, \cdot)$ as $\mathbb{R}$-algebras. This concludes the proof of the``only if'' part of the Lemma \ref{lemmaux}.

Now we prove the sufficiency of the Lemma \ref{lemmaux}. Suppose then $\mathbb{R}[[X_1,....,X_\kappa]]/\langle p_1,...,p_\kappa \rangle \cong \mathbb{R}[[w]]/\langle g(w) \rangle $. The algebra $Q = \mathbb{R}[[w]]/\langle g(w) \rangle$ has a unique maximal ideal given by $\mathfrak{M} = \langle w \rangle / \langle g(w) \rangle$. The set $\mathfrak{M}/\mathfrak{M}^2$ can be viewed as real vector space obtained as the quotient of the subspace $\mathfrak{M}$ by the subspace $\mathfrak{M}^2 \subset \mathfrak{M}$. Its dimension is $0$ or $1$.

The ring $\mathbb{R}[[X_1,..,X_{\kappa}]]$ has a unique maximal ideal $\mathfrak{m} = \langle X_1,...,X_{\kappa} \rangle$ and $A = \mathbb{R}[[X_1,..,X_{\kappa}]]/I$ has a unique maximal ideal denoted $\mathfrak{m}_A = \mathfrak{m}/I$, with $I = \langle p_1,...,p_{\kappa} \rangle$. Similarly to the above $\mathfrak{m}/\mathfrak{m}^2$ is a real vector space and the isomorphism between $A$ and $Q$ implies that $\mathfrak{m}_A/\mathfrak{m}^2_A \cong \mathfrak{M}/\mathfrak{M}^2$ as real vector spaces. 

We can write $\mathfrak{m}^2_A = \frac{\mathfrak{m}^2+I}{I}$. Therefore, 

$$\mathfrak{m}_A/\mathfrak{m}^2_A \cong \frac{\mathfrak{m}/I}{(\mathfrak{m}^2 + I)/I} \cong \frac{\mathfrak{m}}{\mathfrak{m}^2 + I},$$ 

as real vector spaces. Let $q: \mathfrak{m} \to \frac{\mathfrak{m}}{\mathfrak{m}^2}$ be a linear quotient map associating a power series to its linear part.  Consider $\ell_j = (\frac{\partial p_j}{\partial x_1}(0),...,\frac{\partial p_j}{\partial x_{\kappa}}(0))$. Then $q(I) =\text{span}_{\mathbb{R}}\{\ell_1,...,\ell_\kappa\}$. Therefore, we have the following isomorphisms of real vector spaces 
$$ \frac{\mathfrak{m}}{\mathfrak{m}^2 + I} \cong \frac{\frac{\mathfrak{m}}{\mathfrak{m}^2}}{q(I)} \cong \frac{\mathbb{R}^{\kappa}}{\text{span}_{\mathbb{R}}\{\ell_1,...,\ell_\kappa\}}.$$

The dimension of $\text{span}_{\mathbb{R}}\{\ell_1,...,\ell_\kappa\}$ equals the rank of $Jp(0)$. Therefore, the dimension of $\frac{\mathbb{R}^{\kappa}}{\text{span}_{\mathbb{R}}\{\ell_1,...,\ell_\kappa\}}$ is $$ \kappa - \text{dim}_{\mathbb{R}}Jp(0) = \text{dim}_{\mathbb{R}}(\frac{\mathfrak{m}_A}{\mathfrak{m}^2_A}).$$ Therefore, $\text{dim}_{\mathbb{R}}Jp(0)$ is either $\kappa-1$ or $\kappa$, showing that $(G, \s^*)$ is monogenic. This concludes the proof of Lemma \ref{lemmaux}. \end{proof} 

We now prove $(b)$. Since it is known that the absence of payoff-robustness implies $0$ index, we focus on showing the opposite direction. This will require the following auxiliary lemma:

\begin{lemma}\label{lem2}Suppose that the pair $(G, \s^*)$ is monogenic and let $g$ be the power series obtained in Lemma \ref{lemmaux}. If $\s^*$ has index zero then the order of $g$ is even. Conversely, if the order of $g$ is even, then $p$ is not locally surjective (and therefore $\s^*$ has index zero). \end{lemma} 

\begin{proof}[Proof of Lemma \ref{lem2}] Suppose first that $\s^*$ has index zero. Proposition \ref{auxprop} then implies that $\text{dim}_{\mathbb{R}}\mathbb{R}[[w]]/\langle g(w) \rangle$ is even. Fixing the local order in $\mathbb{R}[[w]]$, a basis for $\mathbb{R}[[w]]/\langle g(w) \rangle$ comprises those monomials $w^{\alpha}$ which do not belong to the leading term ideal $\langle LT[\langle g(w) \rangle] \rangle$. Since this leading term ideal is generated by the term corresponding to ord($g$), say, $cw^{k}$, for some $c \neq 0$,  it follows that a basis for $\mathbb{R}[[w]]/\langle g(w) \rangle$ is given by $\{1,....,w^{k-1}\}$, which then implies that $k$ is an even positive integer. 

For the opposite direction, suppose that $g$'s order is even.  Let $cw^{k} = \text{ord}(g) = \text{ord}(h(0,w))$ and suppose without loss of generality that $c>0$ (the proof for $c<0$ is entirely analogous). For a sequence $\d_k \uparrow 0$, let $q_{k} = (0, \d_k) \in \mathbb{R}^{\kappa-1} \times \mathbb{R}$. Then $p(\Phi(u,w)) = (0, \d_k)$ implies that $h(0,w) = \delta_k < 0$. Take $U \subset \mathbb{R}^{\kappa-1} \times \mathbb{R}$ a neighborhood of $0$ such that $h(0,w) >0$, for all $(0,w) \in U$. Such neighborhood exists because the order of $h(0,w)$ is even. Hence, there exists $k_0$ such that for each $k \geq k_0$, $(p \circ \Phi) ^{-1}(q_k) \cap U = \emptyset$. Therefore, $p \circ \Phi$ is not locally surjective at $0$ and since $\Phi$ is a local diffeomorphism, $p$ is not locally surjective at $0$, as well. This, together with $0$ being a locally isolated root of $p$, implies that $\text{deg}_0(p) = 0$. Proposition \ref{degindexequal} implies then that the index of $\s^*$ equals $0$. This concludes the proof of the lemma. \end{proof}

We now conclude the proof of part $(b)$.  Let $\hat{J}$ be the ideal defined by the following set of equations in $Z_n = (Z_{s^0_n},...,Z_{s^{k_n}_n})$ and $Y_n = (Y_{s^0_n},....,Y_{s^{k_n}_n})$ (below, $\mathbbm{1}$ denotes the vector with $1$ in all coordinates),

$$ Z_n - Y_n - G^{n}(Y) = 0, \forall n \in \mathcal{N}$$   $$ (Z_n - Y_n) \cdot (e_{s_n}  - Y_n) = 0, \forall s_n \in S_n, n \in \mathcal{N}$$ $$ Y_n \cdot \mathbbm{1}  -1 = 0, n \in \mathcal{N}$$

Let $\hat{z}$ be the solution to $(\text{proj}\circ \theta^{-1}_G)(z) = 0$, for which $r_n(\hat z_n) = \s^*_n$ for all $n \in \mathcal{N}$. Let us change the variable $Y_n$ to $Y_n + \s^*_n$ and $Z_n$ to $Z_n + \hat{z}_n$ so that the isolated completely mixed solution $Y_n = r_n(\hat z_n)$ and $Z_n = \hat{z}_n$ is translated to zero. Under this change of variables, the ideal generated by the set of equations is denoted by $J$. Now, since $(\text{proj}\circ \theta^{-1}_G)(z)$ is a polynomial in a neighborhood of $\hat z$ and considering the change of variables from $Z_n$ to $Z_n + \hat{z}_n$, denote by $T(Z)$ this polynomial after the change of variables. The proof of part $(b)$ will follow from verifying that the sequence of $\mathbb{R}$-algebra isomorphisms holds:    

$$ \frac{\mathbb{R}[[Y]]}{\langle p_1,...,p_{\kappa} \rangle} \cong \frac{\mathbb{R}[[Z,Y]]}{J} \cong \frac{\mathbb{R}[[Z]]}{\langle T(Z) \rangle}.$$

Assume first that the sequence of isomorphisms above holds in order to conclude the proof. Because $(G,\s^*$) is monogenic and $\s^*$ has index $0$, the Jacobian $Jp(0)$ must lose full rank by exactly $1$. The chain of isomorphisms above then implies that the Jacobian of $T$ at $0$ loses rank by exactly $1$ as well. Applying the same reasoning we applied to $p$ in the proof of Lemma \ref{lem2}, we obtain that $T$ is not locally surjective at $0$. Therefore,  $\text{proj} \circ \theta^{-1}_G$ is not locally surjective at  $\hat{z}$, which is equivalent to $\hat{z}$ not being payoff-robust, which implies our desired result.

We now construct the claimed sequence of isomorphisms: $\varphi: \mathbb{R}[[Z]] \to \mathbb{R}[[Z,Y]]/J$ taking $Z \mapsto [Z]$. Note that the relations in $J$ determine that $Y_n + \s^*_n$ is the retraction of $Z_n + \hat{z}_n$ to the affine space generated by player $n$'s strategy, when $(Z_n,Y_n)$ takes values close to zero. As this retraction can be expressed as a polynomial in the variables of $Z_n$, it implies that each $Y_n$ can be substituted by a polynomial $q_n(Z_n)$ in the variables $Z_n$ in the algebra $\mathbb{R}[[Z,Y]]/J$. Therefore, each equivalence class has a representative in $\mathbb{R}[[Z]]$. Therefore, $\varphi$ is surjective. Now, the kernel of $\varphi$ is simple to compute: note that $(\text{proj} \circ \theta^{-1}_G)_n(z) = z_n - r_n(z_n) - G^n(r_{-n}(z_{-n}))$. Therefore, substituting $Y_n = q_n(Z_n)$ in $J$, the ideal $J$ reduces to $\langle T(Z) \rangle$. Therefore, the kernel of $\varphi$ is $\langle T(Z) \rangle$, which proves the second isomorphism, by passing $\varphi$ to the quotient. Note that the isomorphism is $\mathbb{R}$-linear. 

For the first isomorphism, note that the relations in $J$ allow us to eliminate $Z$ in $\mathbb{R}[[Z,Y]]/J$ by isolating $Z_n$ in the first set of equations. Substituting the expression for $Z_n$ in the second set of equations gives us precisely the polynomials $p_1,....,p_{\kappa}$ (with the third set of equations being used to eliminate one strategy variable per player). The implied isomorphism is also $\mathbb{R}$-linear. This concludes the proof of the sequence of isomorphisms and the proof of the theorem.\end{proof} 

\begin{remark} As can be seen from the proof of item $(a)$ of Theorem \ref{mainthm1}, this makes no use of the particular fact that the polynomials  originate from equilibrium conditions in a finite game. As long as $0$ is isolated as a root of a polynomial map $p$, it holds. Obviously, this is not the case with item $(b)$, which makes use of the particular polynomials describing completely mixed equilibria of the finite game. \end{remark}

\begin{corollary}\label{Cor} Suppose $(G, \s^*)$ is monogenic. The index of $\s^*$ is zero if and only if the dimension of the $\mathbb{R}$-algebra $\mathbb{R}[[X_1,....,X_{\kappa}]]/ \langle p_1,...,p_{\kappa} \rangle$ is even.\end{corollary}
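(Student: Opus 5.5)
The corollary follows by chaining Lemma \ref{lemmaux}, Lemma \ref{lem2} and Proposition \ref{degindexequal}. Since $(G,\s^*)$ is monogenic, Lemma \ref{lemmaux} gives an $\mathbb{R}$-algebra isomorphism $\mathbb{R}[[X_1,\ldots,X_\kappa]]/\langle p_1,\ldots,p_\kappa\rangle \cong \mathbb{R}[[w]]/\langle g(w)\rangle$ with $g(0)=0$.

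\emph{Finite dimensionality.} We first check that $g\neq 0$. If $g=0$, the quotient would be $\mathbb{R}[[w]]$, which is infinite dimensional. But $0$ is an isolated root of $p$, and $g=h(0,\cdot)$ is exactly the remaining coordinate of $p\circ\Phi$ restricted to $u=0$. A vanishing Taylor series would then mean $h(0,\cdot)$ is analytic and identically zero near $0$, producing a curve of roots of $p$ through $0$ and contradicting isolatedness. Hence $g$ has a finite order $k\geq 1$. As in the proof of Lemma \ref{lem2}, the local order on $\mathbb{R}[[w]]$ gives $\langle LT[\langle g\rangle]\rangle=\langle w^k\rangle$, so $\{1,w,\ldots,w^{k-1}\}$ is a basis. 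Therefore $\dim_{\mathbb{R}}\mathbb{R}[[X]]/\langle p\rangle=\operatorname{ord}(g)=k$.

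\emph{Parity and the index.} Lemma \ref{lem2} states that if $\s^*$ has index zero then $k$ is even, and conversely that if $k$ is even then $p$ is not locally surjective, so $\deg_0(p)=0$. By Proposition \ref{degindexequal}, $\deg_0(p)$ equals the index of $\s^*$. Combining these, the index of $\s^*$ is zero if and only if $k$ is even, that is, if and only if the dimension of the algebra is even.

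A direct cross-check of the forward direction, which also covers the case $k$ odd: Proposition \ref{auxprop} gives $|\deg_0 p| = k - 2\dim I_{max}$, so $|\deg_0 p|\equiv k \pmod 2$. If $k$ is odd the index is therefore odd, in particular nonzero; and if the index is zero, $k$ must be even.

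The only step needing real care is the first: $g\neq 0$, i.e.\ finite dimensionality. This is what makes the dimension equal $\operatorname{ord}(g)$ and is what legitimizes the appeal to Proposition \ref{auxprop}. It rests on the analyticity of $\Phi$ together with the isolatedness of the root. Everything after that is bookkeeping from the earlier lemmas.
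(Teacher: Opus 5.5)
Your proof is correct, but it is routed differently from the paper's, which is a pure parity argument. Proposition \ref{auxprop} gives $|\text{deg}_0(p)| = \dim - 2\dim I_{max}$, so $|\text{deg}_0(p)|$ and the dimension have the same parity. Item (a) of Theorem \ref{mainthm1} gives $|\text{deg}_0(p)|\le 1$. Hence the degree, and by Proposition \ref{degindexequal} the index, vanishes exactly when the dimension is even.

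You instead identify the dimension with $\text{ord}(g)$ and invoke both halves of Lemma \ref{lem2}. Your ``only if'' direction, and your cross-check, are the same parity argument as the paper's. For the ``if'' direction, however, you do not use the bound $|\text{deg}_0(p)|\le 1$, which the paper obtains from $|\text{deg}_0(p)| = |\text{deg}_0(g)|$ and the classification of one-variable local degrees. You use instead the direct observation that an even-order $g$ makes $p\circ\Phi$ miss the points $(0,\delta)$ with $\delta$ of sign opposite to the leading coefficient. So $p$ is not locally surjective and has degree $0$.

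Your cross-check combined with item (a) would be exactly the paper's proof. Your main route bypasses item (a), at the cost of working with the explicit normal form $(u,h(u,w))$.

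Your explicit check that $g\neq 0$ is worth keeping. The isomorphism in Lemma \ref{lemmaux} does not by itself exclude $g=0$, and the paper treats finite dimensionality as immediate. It is precisely the analyticity of $h$ together with the isolatedness of the root that rules this out, and this is what legitimizes the use of Proposition \ref{auxprop} in either route.
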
 

\begin{proof}Follows from the formula in Proposition \ref{auxprop} and item (a) of Theorem \ref{mainthm1}. \end{proof} 

As the corollary suggests, losing full rank in the monogenic case is not sufficient to guarantee that the index is zero, i.e., it is possible that a completely mixed and isolated equilibrium of a game has an associated Jacobian without full rank and yet has index different than $0$, as Proposition \ref{falseconjecture} shows. This is a consequence of the proof of Theorem $7$ in \citet{RD2003}. 

Moreover, when there are two players, it is known that an isolated completely mixed equilibrium implies that the Jacobian has full rank: in this case, there are no completely mixed isolated equilibria with zero index, and consequently any isolated completely mixed equilibrium is payoff-robust. The same is not true when there are more players, and one can construct (monogenic) examples where an isolated completely mixed equilibrium has a singular Jacobian with index $0$ (which by (b) in Theorem \ref{mainthm1} implies it is not payoff-robust).

\begin{proposition}\label{falseconjecture}
There exists a finite game with three players where one player has $2$ strategies and the other two  have $3$ strategies each, with a completely mixed equilibrium $\s^*$ such that $(G, \s^*)$ is monogenic with rank of $Jp(0)$ equal to $\kappa-1$ and the index is nonzero.
\end{proposition}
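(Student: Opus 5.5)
The plan is to build an explicit game whose polynomial system $p$ at the completely mixed equilibrium, after translating it to $0$, has Jacobian of rank exactly $\kappa-1$ and reduces to a one-variable power series $g$ of odd order (preferably $3$). By Lemma \ref{lem2} and Corollary \ref{Cor}, odd order of $g$ is the same as odd $\dim_{\mathbb{R}}\mathbb{R}[[X]]/I$, and this forces nonzero index. Here player 1 has $2$ strategies and players 2 and 3 have $3$ each, so $\kappa = 1+2+2 = 5$.

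\textbf{Choosing the system.} I work backwards from a target system. In the translated variables, player $n$'s indifference equations are affine in his own coordinates $X_n$ (with zero coefficient there) and multiaffine in the opponents' coordinates. Any such multiaffine map with no constant term can be realized by suitable payoffs. This is the standard argument: the indifference differences $G_n(e_{s^0_n},\s_{-n})-G_n(e_{s^i_n},\s_{-n})$ are arbitrary multiaffine functions of $\s_{-n}$, because the payoff differences at pure profiles are free. For the equilibrium to be completely mixed, I take $\s^*$ uniform, or any interior point.

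I aim for a system such as
\[
p_1 = y_1 z_1 + \text{(other bilinear terms)},\quad p_{2,1}= x - a\,z_1 z_2,\ p_{2,2} = z_2 + \dots,\quad p_{3,1}= x + \dots,\ p_{3,2}= y_2+\dots .
\]
The key design choices are these:
\begin{itemize}
\item Player 1's equation $p_1$ is purely quadratic in $(y,z)$, so its linear part vanishes.
\item The remaining four equations have linearly independent linear parts in four of the five variables, and they leave one direction $w$ free.
\end{itemize}
This gives rank $\kappa-1=4$. Following the proof of Lemma \ref{lemmaux}, I solve the four nondegenerate equations for the other variables as power series in $w$ and substitute into $p_1$. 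This yields $g(w)$.

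\textbf{Getting odd order.} If the leading behaviour of $g$ were quadratic, the index would be $0$. So I will arrange the substitution so that the $w^2$ coefficient cancels and a nonzero $w^3$ term survives. The extra coefficients in the bilinear terms of player 1 (and the cross terms in players 2 and 3) are the free parameters used for this. Then $\dim \mathbb{R}[[w]]/\langle g\rangle = 3$. Since $\mathbb{R}[[w]]/\langle w^3\rangle$ has maximal square-zero ideal $\langle w^2\rangle$ of dimension $1$, Proposition \ref{auxprop} gives $|\deg_0 p| = 3-2 = 1$. Proposition \ref{degindexequal} then gives index $\pm 1$.

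\textbf{Isolatedness.} Isolatedness of $0$ as a root follows from finite-dimensionality of the local algebra, because $g\not\equiv 0$ and the local diffeomorphism $\Phi$ identifies roots near $0$ with roots of $g$. Isolatedness as an equilibrium within $\Sigma$ is local and therefore follows as well.

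\textbf{Main obstacle.} The main obstacle is the multiaffine constraint. No player's equations may contain his own variables, and no monomial may contain two variables of the same player. This means the quadratic terms of player 1 are only of the form $y_iz_j$, while those of players 2 and 3 mix $x$ with the other opponent's variables, plus the trilinear term for three players. I must check that the parametrization $x(w),y(w),z(w)$ from the linear equations lets $y_i(w)z_j(w)$ produce cancelling $w^2$ contributions while leaving a nonzero $w^3$ term. I would first try $w=z_1$, with $y_1$ and $z_1$ both linear in $w$ to leading order, and $p_1 = y_1 z_1 - y_2 z_2$ with $y_2,z_2$ linear in $w$ as well. I would then tune the coefficients so that the quadratic parts cancel and the cubic term, coming from the quadratic corrections in the other equations, is nonzero. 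This is an explicit finite computation, essentially the construction in \citet{RD2003}, Theorem 7.
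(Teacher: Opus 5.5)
Your overall strategy is viable, and it differs from the paper's. The paper never writes down a system. It takes a univariate quartic $q$ of order $3$ and invokes Datta's Theorem 7 to obtain a game (Alice with $2$ strategies, Bob and Critter with $3$ each) whose completely mixed equilibria are cut out by a system that reduces to $q(a)=0$. It then reads off $\mathbb{R}[[A,B_1,B_2,C_1,C_2]]/\langle F\rangle\cong\mathbb{R}[[a]]/\langle q\rangle$. You instead realize a hand-designed system directly, and your realizability argument is correct: any map that is affine in each opponent's block, with no constant term, arises from suitable payoffs.

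However, the proposal stops exactly at the step that carries the content of the proposition. You never exhibit admissible coefficients that make $g$ have odd order. Moreover, the one concrete equation you write, $p_{2,1}=x-a\,z_1z_2$, violates your own constraint, since $z_1,z_2$ both belong to player 3. Also, with three players each indifference equation depends on only two opponents, so there are no trilinear terms. The only admissible nonlinear monomials are $y_iz_j$ for player 1, $xz_j$ for player 2 and $xy_i$ for player 3. Deferring at that point to ``essentially Datta's construction'' simply falls back on the paper's argument.

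The gap closes easily along your lines. Player 2's equations involve only $(x,z)$ and are linear in $z$ with coefficients affine in $x$; player 3's involve only $(x,y)$ and are linear in $y$ in the same way. So you can parametrize by $x$, getting $g(x)=p_1(y(x),z(x))=x^2h(x)$, and you need $h(0)=0\neq h'(0)$, with the cubic term supplied by an $xz_j$ or $xy_i$ term. For instance,
\[
p_1=y_1z_2,\quad p_{2,1}=x+z_1,\quad p_{2,2}=z_2+xz_1,\quad p_{3,1}=x+y_1,\quad p_{3,2}=y_2 .
\]
This system is admissible and has $Jp(0)$ of rank $4=\kappa-1$. Modulo the ideal, $z_1\equiv y_1\equiv -x$, $y_2\equiv 0$ and $z_2\equiv x^2$, so $p_1\equiv -x^3$. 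Hence $\mathbb{R}[[X]]/I\cong\mathbb{R}[[x]]/\langle x^3\rangle$, and $0$ is the unique root, so $\s^*$ is isolated. Proposition \ref{auxprop} gives absolute degree $3-2=1$, and Proposition \ref{degindexequal} then gives index $\pm1$.

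With this filled in, your route produces an explicit, self-contained game and avoids relying on the sketched reduction in Datta. The paper's route buys generality, since any univariate local algebra $\mathbb{R}[[a]]/\langle q\rangle$ can be reached, at the cost of outsourcing the construction.
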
 

In order to make the proof of Proposition \ref{falseconjecture} more self-contained we make a few observations on the method used by Datta (ibid.). The proposition relies on a simple way of writing a one-variable polynomial system as a logically equivalent system of polynomial equations that ultimately describe the completely mixed equilibrium of a game. If $q(x) = \alpha_dx^d + \alpha_{d-1}x^{d -1} + ... +\a_0 = 0$ is a one variable polynomial equation, then it can be rewritten iteratively as $q(x) = (...(\alpha_d x + \alpha_{d-1})x + \a_{d-2})x + ... +\a_1)x + \a_0 =0$. We can then ``disaggregate'' this polynomial equation  into a system of multiaffine equations in two variables $c_i's$ and $x$: for example, one could write: 

\begin{equation}
  \begin{aligned}
    0 &= \alpha_0 + x(s_1c_1 + \d_1)\\
      s_1c_1 + \d_1 &= \a_1 + x(s_2c_2 + \d_2)\\
		&\vdots \\
       s_{d-1}c_{d-1} + \d_{d-1} &= \a_{d-1} + x\a_d 
  \end{aligned}
\end{equation}

Note that each such equation is multiaffine in the variables $c_i$'s and $x$, and $s_{i}'s$ and $\d_{i}$'s are constants chosen so that the solutions in $c_i$ end up being probabilities (supposing that $q$ has a solution  $x \in (0,1)$). Lemma 4 in Datta (ibid.) shows that systems such as the one above correspond to some of the polynomial equations describing the completely mixed equilibria of a game, namely, the payoff indifference equations of a third player (one that does not have the strategies $c_i$ or $x$).

\begin{proof}[Proof of Proposition \ref{falseconjecture}] The proof is a corollary of the proof of Theorem 7 in \citet{RD2003}, so we simply sketch the reasoning for completeness, using the notation of that theorem. Consider a real polynomial in one-variable $q(a)$ such that $q(0) = 0$, the degree of $q$ is $4$ and the order of $q$ is $3$. Fix some $a^* \in (0,1)$ and consider then the translated polynomial $p(a) = q(a - a^*)$. The polynomial $p$ has an isolated root at $a^*$. By Theorem 7 in \citet{RD2003},  there exists a three-player (Alice, Bob and Critter) game $G$ where player Alice has 2 pure strategies, and Bob and Critter have $3$ strategies each. Consider the polynomial system $Q(a, b_1, b_2, c_1,c_2) =0$ defining the completely mixed equilibria of the game as outlined in p. 430, ibid. (we follow the notation used in Datta's proof, where $a$ denotes one of the mixed strategy coordinates of Alice, $b_1$, $b_2$ denote two of the mixed strategy coordinates of Bob, and $c_1, c_2$ two mixed strategy coordinates of Critter). By construction of the system, there exists a uniquely determined completely mixed equilibrium $\s^* = (a^*, b^*_1, b^*_2, c^*_1, c^*_2)$ (here we omitted a strategy variable for each player, since it is uniquely determined by probabilities) by the root $a^*$. Now, change variables in the system $Q=0$, so that the completely mixed equilibrium $\s^*$ is translated to $0$ in a new system $F(A,B_1,B_2,C_1,C_2) = 0$ with an isolated root at $0$. The system can now be reduced, through a series of substitutions and eliminations of variables (see p. 427, ibid. for an explanation), to $q(a)=0$, thus implying that $\mathbb{R}[[A,B_1,B_2,C_1,C_2]]/ \langle F \rangle \cong \mathbb{R}[[a]]/ \langle q(a) \rangle$, as $\mathbb{R}$-algebras. Therefore $(G,\s^*)$ is monogenic (by Lemma \ref{lemmaux}). Moreover, the Jacobian $JF(0)$ loses rank by $1$, since the maximal ideal of $\mathbb{R}[[a]]/\langle q(a) \rangle$ is non-zero (the elements of the ideal $\langle q(a) \rangle$ all have order at least $3$, so the maximal ideal $\mathfrak{M} = \langle a \rangle / \langle q(a) \rangle $ is non-zero. This in turn implies that $\mathfrak{M}/\mathfrak{M}^2$ has dimension $1$, which determines that $JF(0)$ does not have full rank). Finally, recall that the dimension of the $\mathbb{R}$-algebra $\mathbb{R}[[x]]/\langle q(x) \rangle$ is odd, since ord$(q) =3$, which implies that the index is nonzero (by Lemma \ref{lem2}).\end{proof}

\begin{remark}Note that, since the proof of Theorem $7$ in \citet{RD2003} is constructive, one could obtain the game of Proposition \ref{falseconjecture} explicitly. \end{remark}

\begin{example}\label{example} As an illustration of the results of Theorem \ref{mainthm1}, we revisit the example in subsection \ref{motivatingexample} This is a three-player game with two strategies per player (named $a$ and $b$), with the standard matrix representation below: 
\medskip
\medskip

\begin{table}[h]
\centering

\begin{minipage}{0.45\linewidth}
\centering

\begin{tabular}{c c c}
\cline{2-3}
 & \multicolumn{1}{|c|}{$a$} & \multicolumn{1}{|c|}{$b$} \\
\cline{1-1}\cline{2-3}
\multicolumn{1}{|c|}{$a$} & \multicolumn{1}{|c|}{$(1,1,1)$}  & \multicolumn{1}{|c|}{$(-5,0,3)$} \\
\cline{1-3}
\multicolumn{1}{|c|}{$b$} & \multicolumn{1}{|c|}{$(0,3,-5)$} & \multicolumn{1}{|c|}{$(0,0,1)$} \\
\cline{1-3}
\end{tabular}

\vspace{0.5em}
\textbf{Player 3 chooses $a$}
\end{minipage}
\hfill
\begin{minipage}{0.45\linewidth}
\centering

\begin{tabular}{c c c}
\cline{2-3}
 & \multicolumn{1}{|c|}{$a$} & \multicolumn{1}{|c|}{$b$} \\
\cline{1-1}\cline{2-3}
\multicolumn{1}{|c|}{$a$} & \multicolumn{1}{|c|}{$(3,-5,0)$} & \multicolumn{1}{|c|}{$(1,0,0)$} \\
\cline{1-3}
\multicolumn{1}{|c|}{$b$} & \multicolumn{1}{|c|}{$(0,1,0)$}  & \multicolumn{1}{|c|}{$(0,0,0)$} \\
\cline{1-3}
\end{tabular}

\vspace{0.5em}
\textbf{Player 3 chooses $b$}
\end{minipage}

\medskip
\end{table}

\medskip
\medskip

This game has a completely mixed equilibrium $\s^*$ where each player randomizes with equal probabilities between $a$ and $b$. From this we can compute the polynomial map $p$ corresponding to this completely mixed equilibrium: letting $x$ denote the probability that player 1 puts in $a$, $y$ the probability that player 2 puts in $a$, and $z$ the probability that player 3 puts in $a$, we have the following system of equations: 

\begin{equation} 
\begin{aligned} 
 p_1 &= x-y+xy = 0 \\ p_2 &= y-z+yz = 0 \\ p_3 &= z-x+zx =0
\end{aligned} 
\end{equation} 
\medskip

The equilibrium $\s^*$ is isolated and a quick verification of the Jacobian of the system gives that $(G,\s^*)$ is monogenic. In order to verify that $\s^*$ has index $0$, we use the fact that the algebra given by $\mathbb{R}[[x,y,z]]/\langle p_1, p_2, p_3 \rangle$ has the same dimension as $Q = \mathbb{R}[[x,y,z]]/ \langle LT[ \langle p_1, p_2, p_3 \rangle]\rangle $, and the latter is immediate to compute: fix the degree anticompatible lex order given by $x > y > z$: then, immediately from $p_1$ and $p_2$, $x \in LT[\langle p_1, p_2, p_3 \rangle]$, $y \in LT[\langle p_1, p_2, p_3 \rangle]$. Therefore, a basis for $Q$ is given by $\{1,z,...,z^{k}\}$, for some non-negative integer $k$. We only need to verify now that $k$ is odd in order to obtain that $\s^*$ has index $0$ (if $k$ is even, then it has index $+1$ or $-1$). We can verify this by simple substitution: from $p_2$, $y(1+z) -z = 0$, yielding $y = \frac{z}{1+z}$. From $p_3$, $x(z-1) + z =0$, yielding $x = \frac{z}{1-z}$. Now substituting in $p_1$, it yields $\frac{3z^2}{(1-z)(1+z)}$, and since $\frac{3}{(1-z)(1+z)}$ is a unit in $\mathbb{R}[[x,y,z]]$, it follows that $\{1,z\}$ is a basis for $Q$, since $z^2 \in LT[ \langle p_1, p_2, p_3 \rangle]$. As observed in Corollary \ref{Cor}, because the dimension is even, it follows that the index of $\s^*$ is $0$. From part $(b)$ of Theorem \ref{mainthm1}, it follows that $\s^*$ is not payoff-robust. In p.151 of \citet{AM2018}, a specific payoff perturbation is constructed to show that $\s^*$ is not robust. Theorem \ref{mainthm1} allows us to bypass this construction and simply check the dimension of $Q$.\end{example}

Generalizing from the analysis in the last example, we have the more general statement about three-player games with $2$ strategies per player.  

\begin{proposition}Let $G$ be a finite, three-player game with $2$ strategies per player. Let $\s^*$ be a completely mixed equilibrium of $G$ that is isolated. Then $(G,\s^*)$ is monogenic. Therefore, the indices of isolated equilibria in such games can only be $0$, $-1$ and $+1$. \end{proposition}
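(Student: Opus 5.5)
The plan is to show directly that $Jp(0)$ has rank at least $\kappa-1=2$. The second claim then follows from Theorem \ref{mainthm1}(a).

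\textbf{Structure of $p$.} Here $\kappa=3$. Let $x,y,z$ be the (translated) probabilities that players $1,2,3$ put on their first strategy. The coordinate $p_1$ is a payoff difference for player 1, so it does not depend on $x$. Since payoffs are multilinear, $p_1$ is affine in $y$ and in $z$ separately. As $p(0)=0$,
\[
p_1=a y+b z+\alpha yz,\qquad p_2=c x+d z+\beta xz,\qquad p_3=e x+f y+\gamma xy .
\]
Hence
\[
Jp(0)=\begin{pmatrix}0&a&b\\ c&0&d\\ e&f&0\end{pmatrix},
\]
a matrix with zero diagonal.

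\textbf{Rank $\le 1$ forces a zero column.} Suppose $\operatorname{rank}Jp(0)\le 1$. If the matrix is zero, every column is zero. Otherwise write $Jp(0)=uv^{T}$ with $u,v\neq 0$. If no column vanished, every entry of $v$ would be nonzero. The diagonal conditions $u_iv_i=0$ would then force $u=0$, a contradiction. So in all cases some column of $Jp(0)$ is zero. Say it is the first one, so $c=e=0$; the other cases are symmetric.

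\textbf{Contradiction with isolatedness.} With $c=e=0$ we have $p_2=dz+\beta xz$ and $p_3=fy+\gamma xy$, and $p_1$ does not involve $x$ at all. Every point $(x,0,0)$ then satisfies $p_1=p_2=p_3=0$. For small $|x|$ this is a segment of completely mixed equilibria through $\s^*$, contradicting that $\s^*$ is isolated. Therefore $\operatorname{rank}Jp(0)\ge 2=\kappa-1$, and $(G,\s^*)$ is monogenic.

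\textbf{Conclusion.} Theorem \ref{mainthm1}(a) now gives that the index of $\s^*$ is $0$, $+1$ or $-1$. This covers the isolated completely mixed equilibria in the statement. The only step needing care is the rank-one argument via $uv^{T}$, which uses only that the diagonal of $Jp(0)$ is zero. The rest is direct substitution.
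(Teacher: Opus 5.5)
Your proof is correct and follows essentially the same route as the paper. Both arguments rest on two facts: $Jp(0)$ has zero diagonal, and a variable missing from every linear part (a zero column) gives a segment of completely mixed equilibria through $\s^*$, which contradicts isolatedness. The only difference is in the linear algebra. Your rank-one factorization $Jp(0)=uv^{T}$ handles all cases uniformly and more cleanly than the paper's case split on which linear part vanishes followed by Gaussian elimination.
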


\begin{proof} Let $p = p_1 \times p_2 \times p_3$ be the polynomial map associated to $(G,\s^*)$ in variables $x$ (player $1$'s strategy), $y$ (player $2$'s strategy) and $z$ (player $3$'s strategy). We view $p_1$ as the polynomial of player $1$, that is, the polynomial featuring variables corresponding to strategies of players $2$ and $3$ only. Analogously, we view $p_2$ as the polynomial of player $2$ and $p_3$ as the polynomial of player $3$. Because $0$ is an isolated root of $p$ by assumption, the linear parts of $p_1, p_2$, and $p_3$ must be such that they feature linear monomials in $x, y,$ and $z$: if not, by the multilinearity of each $p_i$, the absence of such a monomial in the linear part of $p_i$ for each $i$ implies that $0$ is not isolated. Now, in each $p_i$, there is at least one variable absent, namely, the variable corresponding to the strategy of player $i$. Let $\ell(p_1)$, $\ell(p_2)$ and $\ell(p_3)$ be the linear parts of polynomials $p_1, p_2$, and $p_3$, respectively. Therefore, we have the following possibilities: (1) each $\ell(p_i)$ features at least one variable; (2) exactly one $\ell(p_i) =0$, with the others $\ell(p_j), j \neq i$ featuring all the remaining variables.  

In the first case, suppose without loss of generality that $\ell(p_2)$  has a linear term in $x$. Then $y$ or $z$ must feature in $\ell(p_1)$. Without loss of generality, suppose that it is $y$. Fix the degree anticompatible lex order $>$, with $x > y > z$. Construct the $3 \times 3$ matrix $M$ as follows: entry $M_{ij}$ is the coefficient of the variable of lex-order $j$ in equation $i$. As the linear term in $x$ cannot feature in $\ell(p_1)$, $M_{11} =0$. Similarly, $M_{22} = M_{33} = 0$. Since $y$ features in $\ell(p_1)$, then $M_{12} \neq 0$. We then have that the rank of $M$ is at least $2$ (for example, by Gaussian elimination). This implies that  $(G, \s^*)$ is monogenic. 

In the second case, suppose without loss of generality that $\ell(p_1)=0$, and $\ell(p_2)$ features two variables, with $\ell(p_3)$ featuring exactly one. Again without loss of generality, suppose that $x$ features in $\ell(p_2)$ and $z$ features in $\ell(p_3)$. Fixing the degree anticompatible lex order $>$, with $x > z > y$, and constructing $M$ as above, implies, by the exact same reasoning, that $M$ has rank at least $2$, which implies that $(G, \s^*)$ is monogenic.\end{proof}

A natural question to ask now is what happens with the indices of isolated and completely mixed equilibria outside the class of monogenic games. The answer to this question is given in Theorem \ref{finalthm}. This theorem can be viewed as a strengthening of the result in Proposition 3.1 in \cite{GSS2004}, which proved that for any integer $n$, one can obtain a (bi-matrix) game with a nondegenerate component of equilibria of index $n$. Theorem \ref{finalthm} can also be viewed as the game-theoretic analog of the known fact of fixed point theory: for any integer $n$, it is possible to show that there exists a map from $\Re^2$ to $\Re^2$ with an isolated fixed point at $0$ with index $n$. 

\begin{theorem}\label{finalthm}Given any integer $n \in \mathbb{Z}$, there exists a three-player game $G$ with an isolated completely mixed equilibrium with index $n$.\end{theorem}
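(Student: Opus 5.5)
The plan is to realise a two–variable polynomial germ of prescribed degree $n$ as the payoff-indifference system of a completely mixed equilibrium, using Datta's universality construction (Theorem 7 and Lemma 4 in \citet{RD2003}), and then to compute the degree with the Eisenbud--Levine signature formula (Theorem 1.2 in \cite{EL1977}). Proposition \ref{degindexequal} identifies the degree of $0$ with respect to $p$ with the index of $\s^*$, so it suffices to produce a three-player game whose associated map $p$ has local degree $n$ at $0$.

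\textbf{Step 1: the target germ.} Start from the complex map $z\mapsto z^n$ (or $\bar z^{|n|}$ for $n<0$, and $z^2\bar z^{\,2}$-type maps or $(x^2,y^2)$-type maps for $n=0$), written in real coordinates as a polynomial map $F=(F_1,F_2):\Re^2\to\Re^2$. It has an isolated root at $0$ (also as a complex root, so the local algebra $\Re[[x,y]]/\langle F_1,F_2\rangle$ is finite dimensional by the Remark after Proposition \ref{auxprop}), and its local degree is $n$. For $n<0$ I use $(\mathrm{Re}\,z^{|n|},-\mathrm{Im}\,z^{|n|})$. Then translate by a point $(a^*,b^*)\in(0,1)^2$ and rescale so that the root lies strictly inside the unit square.

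\textbf{Step 2: encoding in a game.} Next I would generalise Datta's disaggregation from one variable to two. Each monomial term of $F_i$ is written by a Horner-type nesting in $x$ and $y$, introducing auxiliary variables $c_j$, each appearing multiaffinely, as in the displayed system preceding the proof of Proposition \ref{falseconjecture}. The variables $x,y$ are assigned to players 1 and 2, and the auxiliary $c_j$'s to a third player, or distributed between players so that every equation remains multiaffine and contains no variable of its own player. Datta's Theorem 7 gives exactly this for arbitrary real algebraic varieties, namely that every such system is isomorphic to the completely mixed Nash equilibrium set of a three-player game. Invoking it, I obtain $G$ and $\s^*$ whose indifference system $p$ reduces to $F$ by successive substitution and elimination of variables that appear linearly with unit coefficients. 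As in the proof of Proposition \ref{falseconjecture}, this gives an $\Re$-algebra isomorphism $\Re[[X]]/\langle p\rangle\cong\Re[[x,y]]/\langle F_1,F_2\rangle$. In particular, $\s^*$ is isolated and the algebra is finite dimensional.

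\textbf{Step 3: the degree.} Knowing the absolute value from Proposition \ref{auxprop} is not enough, because the sign matters. I would therefore use Theorem 1.2 of \cite{EL1977}: $\deg_0 p$ is the signature of the bilinear form $\langle a,b\rangle=\varphi(ab)$ on the local algebra, where $\varphi$ is any linear functional with $\varphi(\text{Jacobian class})>0$. I expect the main obstacle to be that the elimination multiplies the equations by units and performs coordinate changes. Under the algebra isomorphism, the Jacobian class of $p$ maps to a \emph{unit multiple} of the Jacobian class of $F$, and the sign of that unit at $0$ (equivalently, the sign of the determinant of the linear eliminations) could flip the degree. I would handle this by direct bookkeeping: each elimination step solves an equation of the form $c_j=(\text{polynomial})$, possibly after dividing by a constant. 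The degree is then multiplied by the sign of a computable constant determinant, and I can correct it by replacing $F$ with $(F_1,-F_2)$ if needed. Since both $n$ and $-n$ are realisable by Step 1, every integer is attained. For $n=0$ one could alternatively cite Example \ref{example} directly.
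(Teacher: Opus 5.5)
Your proposal is essentially the paper's own proof. It uses the same germs ($z^m$, $\bar z^{m}$, and a finite germ of degree $0$), and Datta's encoding to get an isomorphism $\mathbb{R}[[X]]/\langle p\rangle\cong\mathbb{R}[[x,y]]/\langle F_1,F_2\rangle$. It then uses the Eisenbud--Levine signature formula to control the sign, where your determinant bookkeeping plays the role of the paper's claim that the constant $c$ relating the Jacobian classes can be taken to be $1$, and finally Proposition \ref{degindexequal} to pass from degree to index.

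Two small remarks. First, $z^2\bar z^{2}=|z|^4$ is real-valued and its local algebra is infinite-dimensional, so for $n=0$ keep only $(x^2,y^2)$, the paper's $(x^2,y)$, or Example \ref{example}. Second, switching to $(F_1,-F_2)$ tacitly assumes the elimination sign is the same for both germs; you can avoid this and flip the sign directly. Replacing $G_n(s^1_n,\cdot)$ by $2G_n(s^0_n,\cdot)-G_n(s^1_n,\cdot)$ negates one coordinate of $p$ and keeps $\s^*$ an isolated completely mixed equilibrium, so it reverses the index by Proposition \ref{degindexequal}. With that, the absolute value from Proposition \ref{auxprop} already suffices.
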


\begin{proof} 
We first construct polynomials with an isolated root at $0$ with any fixed degree. Suppose $m>0$. Consider the complex function $z = (x+ iy) \mapsto (x - iy)^{m}$. Let $F: \mathbb{R}^2 \to \mathbb{R}^2$, be the realization of the complex function as a real polynomial map. The polynomial map $F$ has a unique root at $0$. By construction the winding number of $F$ around $0$ is $m$, thus implying that  $\text{deg}_0(F) = - m$. Alternatively, consider $z \mapsto z^m$ and the realization $F$ of this polynomial to obtain the unique root at $0$ with topological degree $m$. Now suppose $m =0$. Consider the real polynomial map $F(x,y) = (x^2, y)$. This polynomial has an isolated root at $0$. We claim the topological degree of $0$ w.r.t. $F$ is $0$: first we note that the dimension of the $\mathbb{R}$-algebra $\mathbb{R}[[X,Y]]/\langle X^2, Y \rangle$ is $2$, since $\{1,X\}$ are the only surviving monomials under the relations imposed by the ideal (and $\{1,X\}$ is a basis for the space). Therefore, the formula for the computation of the absolute value of the topological degree in Proposition \ref{auxprop} applies. Now note that the maximal square-zero ideal of $\mathbb{R}[[X,Y]]/\langle X^2, Y \rangle$ is generated by $X$. It is, therefore, one-dimensional. The formula then implies that $0$ has topological degree $0$. 

In order to apply Theorem 5 in \citet{RD2003} and obtain a finite game associated to the zero-set of each of the polynomials above, we change variables so that the unique root $0$ is translated to the point where all coordinates are equal to $1/3$, satisfying the requirement of that theorem. Of course, the topological degree of $0$ in either case is invariant under this translation. Applying Theorem 5, we obtain the required game. Now, one needs to verify that the topological degrees of the equilibria are maintained under the new system describing the completely mixed equilibria. For that, consider $F$ as one the polynomial maps constructed above. We checked that in the case of $F(x,y) = (x^2,y)$, the algebra $\mathbb{R}[[X,Y]]/\langle X^2, Y \rangle$ is finite dimensional, and a similar argument yields the same finite-dimensionality for $\mathbb{R}[[X,Y]]/\langle F_1, F_2\rangle$ in the case $F$ is specified as the other polynomials. Applying Theorem 5 in Datta, this implies that the absolute value of the topological degree of the equilibria is maintained, but one needs to show that the signs are maintained as well, so we need a stronger result. We use  Theorem 1.2 in \cite{EL1977}, which provides a formula for the topological degree based on the signature of the bilinear form defined over the $\Re$-algebra. Proposition 2 in \cite{EL1978} yields that the minimal non-zero ideal in $\mathbb{R}[[X,Y]]/ \langle F_1, F_2 \rangle$ is one-dimensional and is generated by the class of the Jacobian determinant of this polynomial map. Since $\mathbb{R}[[X,Y]]/ \langle F_1, F_2 \rangle$ is isomorphic to $\mathbb{R}[[X_1,...,X_{\kappa}]]/ \langle p_1,..., p_{\kappa} \rangle$ (the $\mathbb{R}$-algebra associated to the system of completely mixed equilibria associated to $F$ and obtained by Theorem 5 in Datta), that minimal non-zero ideal maps to the minimal non-zero ideal of  $\mathbb{R}[[X,Y]]/ \langle F_1, F_2 \rangle$ under the isomorphism. Call this isomorphism $\varphi$. Since the minimal non-zero ideals of $\mathbb{R}[[X_1,...,X_{\kappa}]]/ \langle p_1,..., p_{\kappa} \rangle$ and $\mathbb{R}[[X,Y]]/ \langle F_1, F_2 \rangle$ are one-dimensional, $\varphi$ maps one to the other, up to a constant $c$. In the particular case of the substitutions implied by Datta's argument in the proof of Theorem 5, we can take $c =1$. This implies that if $\phi: \mathbb{R}[[X_1,...,X_{\kappa}]]/ \langle p_1,..., p_{\kappa} \rangle \to \mathbb{R}$ is a linear functional that takes value $1$ in the class of the Jacobian determinant, then the signature of the bilinear form $[a,b] := \phi(ab), a,b \in \mathbb{R}[[X_1,...,X_{\kappa}]]/ \langle p_1,..., p_{\kappa} \rangle$ is equal to the topological degree of $0$ under $p$. The isomorphism $\varphi$ allows us to define the linear functional $\psi = \phi \circ \varphi^{-1}$ on $\mathbb{R}[[X,Y]]/ \langle F_1, F_2 \rangle$ that takes value $1$ on the class of the Jacobian determinant of $\mathbb{R}[[X,Y]]/ \langle F_1, F_2 \rangle$. Note that the signaure of $[\cdot, \cdot]$ equals the signature of $(\cdot, \cdot)$, by construction. By Theorem 1.2 in Eisenbud and Levine (ibid.), the signature of $(\cdot, \cdot)$ is equal to the topological degree of $0$ under $F$.  The proof is then concluded by observing that the topological degree and index of the equilibrium point agree, by Proposition \ref{degindexequal}. \end{proof}

\begin{remark} In a brief comment in section 4 of \citet{RD2003}, the author observes that, by viewing the map $z \mapsto z^n$ as a map in $\mathbb{R}^2$ one is able to generate a two-variable real polynomial map $F$ with $0$ as an isolated root, with topological degree $n \in \mathbb{N}$, which implies from the results of the paper that any natural number could be the topological degree of a completely mixed equilibrium. The author here does not seem to refer to the canonical notion of topological degree of equilibrium derived from the  Structure Theorem (\cite{KM1986}), but simply to the topological degree determined by the polynomial $F$ to $0$.\end{remark}

\section{Extensions}\label{SEC4}

There are straightforward extensions of the index formula presented in Proposition \ref{auxprop}. First, the index of equilibria is invariant under the elimination of duplicate strategies (i.e., mixed strategies of a player that are added to the game as pure strategies of that player).  This implies that if there is a non-degenerate component of equilibria of a game that reduces to a singleton completely mixed equilibrium once duplicate strategies are eliminated, then this implies that the indices of the component and the equivalent singleton are the same. As an example, if we add to the three-player game in Example \ref{example} the completely mixed equilibrium mixed strategy of player 1 as a pure strategy, then a nondegenerate component of equilibria arises, and this component has the same index as the singleton once the duplicate strategy is eliminated. Therefore, if the singleton is monogenic in the reduced game, then the results of Theorem \ref{mainthm1} apply as well. 

Second, the index is also invariant under the addition or elimination of strictly inferior replies to the equilibrium: therefore, if a singleton equilibrium lies on the boundary of the strategy set and after elimination of inferior replies, it becomes completely mixed in the reduced game, then the results of this paper apply to such boundary cases. As an illustration, consider the following outside-option game with three players: player 1 first decides between the outside-option (where the game ends and players get $(-1,2,7)$) or coming In and playing the simultaneous move game defined in Example \ref{example}. Notice that the payoff to player 1 of the equilibrium where player 1 chooses In and then the completely mixed equilibrium ($\s^*$) is played is $0$ (this equilibrium is on the boundary of the mixed strategy set). Any strategy of player 1 where $Out$ is chosen first is a strictly inferior reply to that equilibrium. Therefore, the index of the equilibrium In-$\s^*$ is $0$ in the outside-option game. 

A third (less straightforward) extension of the results in this paper concerns extensive form games with $N$ players, namely, equilibrium outcomes that are isolated and completely mixed (i.e., where the distribution over terminal nodes induced by equilibria is completely mixed and isolated in the space of equilibrium distributions).\footnote{For a generic choice of terminal payoffs in a fixed game tree with perfect recall, one has that each one of the finitely many connected components in mixed strategies of the game yields a unique equilibrium outcome distribution.} In order to compute the index of an isolated completely mixed equilibrium outcome, one first needs to compute the strategies inducing the outcomes in \textit{enabling form} (see \cite{GW2002} or \cite{BS1996}): since the outcome is isolated and completely mixed, there is a unique enabling strategy profile $e^*$ inducing it, which is in the interior of the enabling strategy set of the players and is isolated in the space of equilibria in enabling form. In each enabling strategy space $E_n$ of player $n$, consider a simplex $\Delta_n$ satisfying the following: (1) $\D_n$ is contained in the interior of $E_n$; (2) $\D_n$ has the same dimension as $E_n$; (3) $\D_n$ contains $e^*_n$ in its interior; (4) $\D = \times_n \D_n$ contains no other equilibrium of the game besides $e^*$. Consider now the enabling form of this game (see section 4 of \cite{LP2023}): if $V_n: \times_n E_n \to \Re$ is player $n$'s multiaffine payoff function in enabling strategies, let $G_n(v_1,...,v_N) := V_n(v_1,...,v_N)$, where $v_n$ is a vertex of $\D_n$. We can view each vertex  $v_n$ as a pure strategy of player $n$ and therefore define a normal form game $G$ by considering the mixed extension over these pure strategies. This normal form game now has a completely mixed equilibrium $\s^*$ (which corresponds to $e^*$), and the index of $\s^*$ can be computed through the formula presented in Proposition \ref{auxprop}. This index of the component of equilibria in mixed strategies that induces the equilibrium outcome distribution is identical to the normal form index of $\s^*$ in $G$ (see Proposition 3.10 in \cite{LP2023}). This procedure outlines a method to compute the index of an isolated completely mixed equilibrium outcome of an extensive form game by reducing the initial game to a normal form game with an associated isolated completely mixed equilibrium that has the same index. The procedure described above in particular implies that in two-player extensive form games completely mixed isolated equilibrium outcomes have indices $+1$ or $-1$ only. For more players, if the completely mixed outcomes reduce to monogenic equilibria in enabling strategies, then $0$, $-1$ or $+1$ are the only possible indices.  The extension outlined above yields a ``perturbation-free'' algorithm to compute the index of isolated completely mixed outcomes.

\begin{example}\label{extensiveformexample} We illustrate the method described above in the following 3-player extensive form game.
\begin{center}
\begin{tikzpicture}[
  x=1cm,y=1cm,
  every node/.style={font=\small},
  decision/.style={draw,circle,inner sep=1.2pt,minimum size=6mm},
  terminal/.style={inner sep=1pt},
  act/.style={midway,fill=white,inner sep=1pt},
  line/.style={thick}
]
\node[decision] (P1) at (0,0) {1};
\node[decision] (P2L) at (-4,-2) {2};
\node[decision] (P2R) at (4,-2) {2};

\node[decision] (LU) at (-6,-4) {3};
\node[decision] (LD) at (-2,-4) {3};
\node[decision] (RU) at (2,-4) {3};
\node[decision] (RD) at (6,-4) {3};

\node[terminal] (LUA) at (-7,-6) {$(1,1,2)$};
\node[terminal] (LUB) at (-5,-6) {$(1,0,0)$};
\node[terminal] (LDA) at (-3,-6) {$(0,0,1)$};
\node[terminal] (LDB) at (-1,-6) {$(0,1,0)$};
\node[terminal] (RUA) at (1,-6) {$(0,1,-1)$};
\node[terminal] (RUB) at (3,-6) {$(0,0,0)$};
\node[terminal] (RDA) at (5,-6) {$(1,\tfrac13,-2)$};
\node[terminal] (RDB) at (7,-6) {$(1,0,0)$};

\draw[line] (P1) -- (P2L) node[act,above] {$L$};
\draw[line] (P1) -- (P2R) node[act,above] {$R$};

\draw[line] (P2L) -- (LU) node[act,above] {$U$};
\draw[line] (P2L) -- (LD) node[act,above] {$D$};
\draw[line] (P2R) -- (RU) node[act,above] {$U$};
\draw[line] (P2R) -- (RD) node[act,above] {$D$};

\draw[line] (LU) -- (LUA) node[act,above] {$A$};
\draw[line] (LU) -- (LUB) node[act,above] {$B$};
\draw[line] (LD) -- (LDA) node[act,above] {$A$};
\draw[line] (LD) -- (LDB) node[act,above] {$B$};
\draw[line] (RU) -- (RUA) node[act,above] {$A$};
\draw[line] (RU) -- (RUB) node[act,above] {$B$};
\draw[line] (RD) -- (RDA) node[act,above] {$A$};
\draw[line] (RD) -- (RDB) node[act,above] {$B$};

\node[above=6mm] at (P1) {};
\node[left=8mm] at (P2L) {};
\node[left=8mm] at (LU) {};

\draw[dashed,thick] (LU) to[bend left=18] node[midway,above] {$I_U$} (RU);
\draw[dashed,thick] (LD) to[bend right=18] node[midway,below] {$I_D$} (RD);
\end{tikzpicture}
\end{center}

Let $x$ be the probability of $L$, $y$ be the probability of $U$ after $L$, $z$ the probability of $U$ after $R$, $w$ the probability of $A$ in the information set $I_U$ of player 3, and $v$ the probability of $A$ in the information set $I_D$ of player 3. This game has a completely mixed, isolated equilibrium given by: 

$$ x^\ast=\frac12,\qquad y^\ast=\frac13,\qquad z^\ast=\frac23,\qquad w^\ast=\frac14,\qquad v^\ast=\frac34.$$

Because in this game no player can exclude an information set of herself by chosing any of her actions, the set of behavior strategies of each player is equal to her set of enabling strategies. Players 1's enabling strategy set is equal to $E_1 = [0,1]$, Player $2$'s enabling strategy set is equal to $E_2 = [0,1]^2$ (with coordinates $(y,z)$) and Player $3$'s strategy set is equal to $E_3 = [0,1]^2$ (with coordinates $(w,v)$). The multiaffine payoff functions of each player are given by: 

\begin{align*}
V_1(x,y,z,w,v) &= xy + (1-x)(1-z),\\
V_2(x,y,z,w,v) &= x\big(yw+(1-y)(1-v)\big) + (1-x)\big(zw+(1-z)\tfrac{v}{3}\big),\\
V_3(x,y,z,w,v) &= x\big(y(2w)+(1-y)v\big) + (1-x)\big(z(-w)+(1-z)(-2v)\big).
\end{align*}

We define now the auxiliary normal form game $\hat G$ to this enabling form in order to compute the index of that equilibrium. Fix $\varepsilon=0.05$.
Define a 1-simplex for Player 1:
\[
\Delta_1=\mathrm{conv}\{\,x^\ast-\varepsilon,\ x^\ast+\varepsilon\,\}\subset (0,1).
\]
Define 2-simplices for Players 2 and 3:
\begin{align*}
\Delta_2 &= \mathrm{conv}\Big\{(y^\ast-\varepsilon,z^\ast-\varepsilon),\ (y^\ast+\varepsilon,z^\ast-\varepsilon),\ (y^\ast,z^\ast+\varepsilon)\Big\}\subset (0,1)^2,\\
\Delta_3 &= \mathrm{conv}\Big\{(w^\ast-\varepsilon,v^\ast-\varepsilon),\ (w^\ast+\varepsilon,v^\ast-\varepsilon),\ (w^\ast,v^\ast+\varepsilon)\Big\}\subset (0,1)^2.
\end{align*}
$e^\ast = (x^*, y^*, z^*, w^*, v^*)$ lies in the interior of each $\Delta_n$, and $\Delta:=\Delta_1\times\Delta_2\times\Delta_3$ contains no other equilibrium besides $e^\ast$. Let the vertices of $\Delta_2$ be $v_2^1,v_2^2,v_2^3$ as listed above, and similarly for $\Delta_3$. Define a normal-form game $\hat G$ where Player $n$'s pure strategies are the vertices of $\Delta_n$, and payoffs are given as follows:

\[
\hat G_n(v_1^i,v_2^j,v_3^k) := V_n(v_1^i,v_2^j,v_3^k).
\]

Viewing vertices as pure strategies of the players in $\hat G$, this defines a normal form game with an isolated completely mixed equilibrium $\s^*$ corresponding to $e^*$ with the same index. The equilibrium $\s^*$ is given by writing $e^*$ in the barycentric coordinates of the simplices $\D_n$: 

\[
x^\ast=\tfrac12 v_1^1+\tfrac12 v_1^2,\qquad
(y^\ast,z^\ast)=\tfrac14 v_2^1+\tfrac14 v_2^2+\tfrac12 v_2^3,\qquad
(w^\ast,v^\ast)=\tfrac14 v_3^1+\tfrac14 v_3^2+\tfrac12 v_3^3.
\]

We know employ the techniques developed in this paper to compute the index of $\s^*$. We first compute the polynomial system given by $p$ defined in the beginning of section \ref{SEC3}. There are five equations in five unknowns $(P,Q_1,Q_2,R,S)$:  

\begin{align}
 p_1 &= 3Q_1+Q_2 =0 \label{eq:p1}\\
 p_2 &= 11R-7S + 28\varepsilon PR + 52\varepsilon PS =0 ,\label{eq:p2}\\
 p_3 &= -7R-13S -44\varepsilon PR + 28\varepsilon PS =0 \label{eq:p3}\\
 p_4 &= -24\varepsilon P Q_1 -24\varepsilon P Q_2 -16P + 6Q_1 + 6Q_2 =0 \label{eq:p4}\\
 p_5 &= -168\varepsilon P Q_1 -72\varepsilon P Q_2 -16P + 18Q_1 + 42Q_2 =0\label{eq:p5}
\end{align}
We compute the topological degree of the root $(P,Q_1,Q_2,R,S)=(0,0,0,0,0)$ using the formula of Proposition \ref{auxprop}, which then gives us the index of that the completely mixed equilibrium $\s^*$, which in turn gives us the index of the connected component in mixed strategies inducing the outcome.

First use equation \eqref{eq:p1} to eliminate $Q_2$ and substitute the result in \eqref{eq:p4} and \eqref{eq:p5}, thus obtaining a system in $4$ variables and $4$ equations. Equations   \eqref{eq:p4} and \eqref{eq:p5} now become: 

\begin{align}
&48\varepsilon P Q_1 -16P -12Q_1=0,\label{eq:g4}\\
&48\varepsilon P Q_1 -16P -108Q_1=0.\label{eq:g5}
\end{align}

Lets fix the degree anticompatile lex order given by  $Q_1 < P < R < S$. Note now that a linear combination of the equations \eqref{eq:g4} and \eqref{eq:g5} (which is of course in the ideal $I$ generated by the polynomial map $p$) gives an equation in one variable only ($Q_1$). Therefore $Q_1 \in LT[I]$. In addition, from equation \eqref{eq:g4}, we have that $P$ also belongs to $LT[I]$. From equation $\eqref{eq:p2}$, $S$ also belongs to $LT[I]$. Therefore, $\mathbb{R}[[Q_1,P,R,S]]/LT[I]$ has a basis of monomials in $R$ only. But using the combination of $\eqref{eq:g4}$ and $\eqref{eq:g5}$, one gets that $Q_1 = 0$. Substituting this equation in \eqref{eq:g4}, we get $P = 0$ as well. Substituting then the latter in $\eqref{eq:p2}$ and $\eqref{eq:p3}$, we get two linear equations in $R$ and $S$. Eliminating $S$ from these equations, we get a linear equation in $R$, which shows that $R \in LT[I]$. Therefore, $\mathbb{R}[[Q_1,P,R,S]]/LT[I]$ has dimension $1$ with $I_{max} = 0$. Therefore, the absolute value of the degree is $1$.

\end{example}

\section{Final Remarks}\label{SEC5} There are important cases that do not necessarily fit the framework presented in this paper: generic extensive-form games frequently involve non-degenerate components of equilibria that are located on the boundary of the strategy set (in mixed strategies). For example, in the case of two-player outside-option games, where the outside-option is an equilibrium. In such cases, even under the transformations we discussed in the previous paragraph, one cannot in general reduce the equilibrium component to a completely mixed equilibrium after the elimination of duplicates or inferior replies, or recasting the game in enabling form. These examples indicate that more work needs to be done in order to extend these algebraic methods to non-degenerate components of equilibria which are possibly not completely mixed. 

Finally, there is an important question about the relation between non-zero index and payoff-robustness that this paper was not able to answer: are there completely mixed equilibria that have index $0$, but are payoff-robust? At first, it seemed to us that this question could be handled by the methods presented in \cite{RD2003} in combination with the algebraic methods developed here: Datta's method allows us to reduce questions about equilibria in games to questions about real algebraic varieties, but this reduction does not seem to preserve payoff-robustness of equilibria: a zero of a polynomial map might be robust to perturbations of the coefficients of the polynomial map, but the corresponding equilibrium to that zero - after applying Datta's methods - might not be payoff-robust. The issue arises because the polynomial systems constructed in Datta that describe the completely mixed equilibria of a game and reduce to the original polynomial map after substitutions are highly non-generic, a fact that yields a payoff space that is possibly higher dimensional than the space of coefficients of the original polynomial map.

\renewcommand{\bibfont}{\small}
\bibliographystyle{abbrvnat}
\bibliography{references}

\end{document}